\numberwithin{equation}{section}
\newcommand{\mc}[1]{\mathcal{#1}}
\newcommand{\PP}{\mathbb{P}}
\newcommand{\RR}{\mathbb{R}}
\newcommand{\QQ}{\mathbb{Q}}
\newcommand{\half}{\frac{1}{2}}
\newcommand{\abs}[1]{\left|#1\right|}
\newtheorem{theo}{Theorem}[section]
\newtheorem{lem}[theo]{Lemma}
\newtheorem{rem}[theo]{Remark}
\newtheorem{prop}[theo]{Proposition}
\newtheorem{assump}[theo]{Assumption}
\newtheorem{defi}[theo]{Definition}
\begin{document}

\title{\vspace{-20pt}$N$-player and mean-field games in It\^{o}-diffusion
markets with competitive or homophilous interaction }
\author{Ruimeng Hu\thanks{%
Department of Mathematics and Department of Statistics \& Applied
Probability, University of California, Santa Barbara, CA 93106-3080, \emph{%
rhu@ucsb.edu}. } \and Thaleia Zariphopoulou\thanks{%
Departments of Mathematics and IROM, The University of Texas at Austin,
Austin, USA, and the Oxford-Man Institute, University of Oxford, Oxford, UK, 
\emph{zariphop@math.utexas.edu}.}}
\date{\today}
\maketitle

\begin{abstract}
In It\^{o}-diffusion environments, we introduce and analyze $N$-player and common-noise mean-field games in the context of optimal portfolio choice in a common market. The players invest in a finite horizon and also interact, driven either by competition or homophily. We study an incomplete market model in which the players have constant individual risk tolerance coefficients (CARA utilities). We also consider the general case of random individual risk tolerances and analyze the related games in a complete market setting. This randomness makes the problem substantially more complex as it leads to ($N$ or a continuum of) auxiliary ``individual'' It\^{o}-diffusion markets. For all cases, we derive explicit or closed-form solutions for the equilibrium stochastic processes, the optimal state processes, and the values of the games. 
\end{abstract}

\section{Introduction}

In It\^{o}-diffusion environments, we introduce $N$-player and common-noise mean-field games (MFGs) in the context of optimal portfolio choice in a common market. We build on the framework and notions of \cite{LaZa:17} (see, also, \cite{lacker2020many}) but allow for a more general market model (beyond the log-normal case) and, also, consider more complex risk preferences. 

The paper consists of two parts. In the first part, we consider a common incomplete market and players with individual exponential utilities (CARA) who invest while interacting with each other, driven either by competition or homophily. We derive the equilibrium policies, which turn out to be state (wealth)-independent stochastic processes. Their forms depend on the market dynamics, the risk tolerance coefficients, and the underlying minimal martingale measure. We also derive the optimal wealth and the values of both the $N$-player and the mean-field games, and discuss the competitive and homophilous cases.

In the second part, we assume that the common It\^{o}-diffusion market is complete, but we generalize the model in the direction of risk preferences, allowing the risk tolerance coefficients to be random variables. For such preferences, we first analyze the single-player problem, which is interesting in its own right. Among others, we show that the randomness of the utility ``distorts'' the original market by inducing a ``personalized'' risk premium process. This effect is more pronounced in the $N$-player game where the common market is now replaced by ``personalized'' markets whose stochastic risk premia depend on the individual risk tolerances. As a result, the tractability coming from the common market assumption is lost. In the MFG setting, these auxiliary individual markets are randomly selected (depending on the type vector) and aggregate to a common market with a modified risk premium process. We characterize the optimal policies, optimal wealth processes, and game values, building on the aforementioned single-player problem.

To our knowledge, $N$-player games and MFGs in It\^{o}-diffusion market settings have not been considered before except in preprint \cite{fu2020mean}. Therein, the authors used the same asset specialization framework and same CARA preferences as in \cite{LaZa:17} but allowed for It\^{o}-diffusion price dynamics. They studied the problem using a forward-backward stochastic differential equation (FBSDE) approach. In our work, we have different model settings regarding both the measurability of the coefficients of the It\^{o}-diffusion price processes and the individual risk tolerance inputs. We also solve the problems using a different approach, based on the analysis of portfolio optimization problems of exponential utilities in semi-martingale markets.  

The theory of mean-field games was introduced by Lasry and Lions \cite{lasry2007mean}, who developed the fundamental elements of the mathematical theory and, independently, by Huang, Malham\'{e} and Caines who considered a particular class \cite{huang2006large}. Since then, the area has grown rapidly both in terms of theory and applications. Listing precise references is beyond the scope of this paper.

Our work contributes to $N$-player games and MFG in It\^{o}-diffusion settings for models with controlled processes whose dynamics depend linearly on the controls and are state-independent, and, furthermore, the controls appear in both the drift and the diffusion parts. Such models are predominant in asset pricing and in optimal portfolio and consumption choice. In the context of the general MFG theory, the models considered herein are restrictive. On the other hand, their structure allows us to produce explicit/closed-form solutions for It\^{o}-diffusion environments.

The paper is organized as follows. In Section~\ref{sec:incomplete}, we study the incomplete market case for both the $N$-player game and the MFG, and for CARA utilities. In Section~\ref{sec:complete}, we focus on the complete market case but allow for random risk tolerance coefficients. In analogy to Section~\ref{sec:incomplete}, we analyze both the $N$-player game and the MFG. We conclude in  Section~\ref{sec:conclude}.

\section{Incomplete It\^{o}-diffusion common market and CARA utilities}\label{sec:incomplete}

We consider an incomplete It\^{o}-diffusion market, in which we introduce an $N$-player and a mean-field game for players who invest in a finite horizon while interacting among them, driven either by competition or homophily. We assume that the players (either at the finite or the continuum setting) have individual constant risk tolerance coefficients. For both the $N$-player and the MFG, we derive in closed form the optimal policies, optimal controlled processes, and the game values. The analysis uses the underlying minimal martingale measure, related martingales, and their decomposition.

\subsection{The $N$-player game}\label{sec_Nagent}

Consider a probability space $(\Omega ,\mathcal{F},\mathbb{P})$ supporting two Brownian motions $(W_{t},W_{t}^{Y})_{t\in [0,T]},$ $T<\infty ,$ imperfectly correlated with the correlation coefficient $\rho \in \left( -1,1\right) $. We denote by $(\mc{F}_t)_{t \in [0,T]}$ the natural filtration generated by both $W$ and $W^Y$, and by $(\mc{G}_t)_{t \in [0,T]}$ the one generated only by $W^Y$. We then let $(\mu _{t}) _{t\in [ 0,T]}$ and $(\sigma
_{t})_{t\in \left[ 0,t\right] }$ be $\mathcal{G}_{t}$-adapted
processes, with $0<c \leq \sigma _{t}\leq C$ and $\abs{\mu _{t}}\leq C$, $t\in \left[
0,T\right]$,  for some (possibly deterministic) constants $c$ and $C$. 

The financial market consists of a riskless bond (taken to be the numeraire
and with zero interest rate) and a stock whose price process $(S_{t}) _{t\in \left[ 0,T\right] }$ satisfies 
\begin{equation}
\,dS_{t}=\mu _{t}S_{t}\,dt+\sigma _{t}S_{t}\,dW_{t},  \quad S_0 = s_0 \in \RR^+. \label{stock-1}
\end{equation}%

In this market, $N$ players, indexed by $i\in \mathcal{I},$ $\mathcal{I}%
=\left \{ 1, 2, \ldots, N\right \}$, have a common investment horizon $\left[ 0,T\right] $
and trade between the two accounts. Each player, say player $i,$ uses a
self-financing strategy $(\pi _{t}^{i})_{t\in \lbrack 0,T]}$, representing
(discounted by the numeraire) the amount invested in the stock. Then, her
wealth $( X_{t}^{i}) _{t\in [0,T] }$ satisfies 
\begin{equation}
\,dX_{t}^{i}=\pi _{t}^{i}\left( \mu _{t}\,dt+\sigma _{t}\,dW_{t}\right)
,\quad \  \ X_{0}^{i}=x_{i}\in \mathbb{R},  \label{X-i}
\end{equation}%
with $\pi ^{i}$ being an admissible
policy, belonging to
\begin{equation}
\mathcal{A}=\left \{ \pi :\text{self-financing, }\mathcal{F}\text{%
-progressively measurable and }E_{\mathbb{P}}\left[ \int_{0}^{T}\sigma_s^2\pi
_{s}^{2}\,ds\right] <\infty \right \} .  \label{admissible-set-part1}
\end{equation}

As in \cite{LaZa:17} (see also \cite{basak2015competition, espinosa2015optimal,
huang2016mean,kraft2020dynamic, lacker2020many, whitmeyer2019relative}),
players optimize their expected terminal utility but are, also, concerned
with the performance of their peers. For an arbitrary but \textit{fixed} 
policy $( \pi _{1}, \ldots, \pi_{i-1}, \pi_{i+1}, \ldots,\pi _{N})$, 
player $i$, $i\in \mathcal{I}$, seeks to optimize 
\begin{equation}
V^{i}\left( x_{1}, \ldots, x_{i}, \ldots, x_{N}\right) =\sup_{\pi ^{i}\in \mathcal{A}%
}E_{\mathbb{P}}\left[ \left. -\exp \left( -\frac{1}{\delta _{i}}\left(
X_{T}^{i}-c_{i}C_{T}\right) \right) \right \vert
X_{0}^{1}=x_{1}, \ldots, X_{0}^{i}=x_{i}, \ldots, X_{0}^{N}=x_{N}\right],
\label{i-utility}
\end{equation}%
where 
\begin{equation}
C_{T}:=\frac{1}{N}\sum_{j=1}^{N}X_{T}^{j}  \label{C-process}
\end{equation}%
averages all players' terminal wealth, with $X_{T}^{j}$, $j=1, \ldots, N$, given by \eqref{X-i}.

The parameter $\delta _{i}>0$ is the individual (absolute) risk tolerance
while the constant $c_{i}\in (-\infty ,1]$ models the individual interaction
weight towards the average wealth of all players. If $c_{i}>0$,
the above criterion models \textit{competition} while when $c_{i}<0$ it
models \textit{homophilous} interactions (see, for example, \cite{leng2020learning}). The
optimization criterion \eqref{i-utility} can be, then, viewed as a
stochastic game among the $N$ players, where the notion of
optimality is being considered in the context of \textit{a Nash equilibrium}, 
stated below (see, for example, \cite{carmona2016lectures}).

\begin{defi}\label{def:Nash}
A strategy $(\pi _{t}^{\ast}) _{t\in \left[
0,T\right] }=(\pi _{t}^{1,\ast },\ldots ,\pi _{t}^{N,\ast })_{t\in [0,T] }\in \mathcal{A}^{\otimes N}$ is called a Nash equilibrium
if, for each $i\in \mathcal{I}$ and $\pi ^{i}\in \mathcal{A},$ 
\begin{multline}\label{Nash}
E_{\mathbb{P}}\left[ \left. -\exp \left( -\frac{1}{\delta _{i}}\left(
X_{T}^{i,\ast }-c_{i}C_{T}^{\ast }\right) \right) \right \vert
X_{0}^{1}=x_{1}, \ldots, X_{0}^{i}=x_{i}, \ldots, X_{0}^{N}=x_{N}\right]  \\
\geq E_{\mathbb{P}}\left[ \left. -\exp \left( -\frac{1}{\delta _{i}}\left(
X_{T}^{i}-c_{i}C_{T}^{i,\ast }\right) \right) \right \vert
X_{0}^{1}=x_{1}, \ldots, X_{0}^{i}=x_{i}, \ldots, X_{0}^{N}=x_{N}\right]
\end{multline}
with 
\begin{equation*}
C_{T}^{\ast }:=\frac{1}{N}\sum_{j=1}^{N}X_{T}^{j,\ast }\text{ \  \ and \  \ }%
C_{T}^{i,\ast }:=\frac{1}{N}\left( \sum_{j=1,j\neq i}^{N}X_{T}^{j,\ast
}+X_{T}^{i}\right) ,\text{\ }
\end{equation*}%
where $X_{T}^{j,\ast }$, $j\in \mathcal{I}$, solve \eqref{X-i} with $\pi^{j,\ast }$ being used. 

\end{defi}

In this incomplete market, we recall the associated \textit{minimal
martingale} measure $\mathbb{Q}^{MM}$,
defined on $\mathcal{F}_{T},$ with 
\begin{equation}\label{def_MM}
\frac{\,d\mathbb{Q}^{MM}}{\,d\mathbb{P}}=\exp \left( -\frac{1}{2}%
\int_{0}^{T}\lambda _{s}^{2}\,ds-\int_{0}^{T}\lambda _{s}\,dW_{s}\right) ,
\end{equation}%
where $\lambda _{t}:=\frac{\mu _{t}}{\sigma _{t}}$, $t\in [0,T]$, is the Sharpe ratio process (see, among others, \cite{FoSc:91}). By the assumptions on the model
coefficients, we have that, for $t\in \left[ 0,T\right] $, $\lambda _{t}\in 
\mathcal{G}_{t} $ and
\begin{equation}
\abs{\lambda _{t}}\leq K,  \label{lamda-bounds}
\end{equation}%
for some (possibly deterministic) constant $K$. We also consider the
processes $(\widetilde{W}_{t}) _{t\in \left[ 0,T\right] }$ and $(\widetilde{W}_{t}^{Y})_{t\in [ 0,T] }$ with $%
\widetilde{W}_{t}=W_{t}+\int_{0}^{t}\lambda _{s}\,ds$ and $\widetilde{W}%
_{t}^{Y}=W_{t}^{Y}+\rho \int_{0}^{t}\lambda _{s}\,ds,$ which are standard
Brownian motions under $\mathbb{Q}^{MM}$ with $\widetilde{W}_{t}\in \mathcal{F}_{t}$ and $\widetilde{%
W}_{t}^{Y}\in \mathcal{G}_{t}.$

Next, we introduce the $\mathbb{Q}^{MM}$-martingale $(M_{t})
_{t\in \left[ 0,T\right] }$,  
\begin{equation}
M_{t}:=E_{\mathbb{Q}^{MM}}\left[ \left. e^{-\frac{1}{2}(1-\rho
^{2})\int_{0}^{T}\lambda _{s}^{2}\,ds}\right \vert \mathcal{G}_{t}\right] .
\label{martingale}
\end{equation}%
From \eqref{lamda-bounds} and the martingale representation theorem, there exists a $\mathcal{G}_{t}$-adapted process $\xi\in 
\mathcal{L}^{2}\left(\mathbb{P}\right) $ such that 
\begin{equation}
\,dM_{t}=\xi _{t}M_{t}\,d\widetilde{W}_{t}^{Y}=\xi _{t}M_{t}\left( \rho \,d%
\widetilde{W}_{t}+\sqrt{1-\rho ^{2}}\,dW_{t}^{\perp }\right) ,
\label{def_xi}
\end{equation}%
where $W_{t}^{\perp }$ is a standard Brownian motion independent of $W_{t}$
appearing in the decomposition $W_{t}^{Y}=\rho W_{t}+\sqrt{1-\rho ^{2}}%
W_{t}^{\perp }$.

\medskip

In the absence of interaction among the players ($c_{i}\equiv 0,$ $i\in 
\mathcal{I}),$ the optimization problem \eqref{i-utility} has been analyzed
by various authors (see, among others, \cite{RoEl:00,SiZa:05}). We recall
its solution which will be frequently used herein.

\begin{lem}[no interaction]\label{lem:nocompetition}
Consider the optimization problem 
\begin{equation}
v(x) =\sup_{a \in \mathcal{A}}E_{\mathbb{P}}\left[ \left.
-e^{-\frac{1}{\delta }x_{T}}\right \vert x_{0}=x\right] ,
\label{v-classical}
\end{equation}%
with $\delta >0$ and $(x_{t})_{t\in \left[ 0,T\right] }$
solving 
\begin{equation}
\,dx_{t}=a_{t}\left( \mu _{t}\,dt+\sigma _{t}\,dW_{t}\right) ,  \quad x_{0}=x\in \mathbb{R}, \quad a \in \mathcal{A}. \label{x-s}
\end{equation}%
Then, the optimal
policy $\left( a_{t}^{\ast }\right) _{t\in \left[ 0,T\right] }$ and the
value function are given by 
\begin{equation}
 a_{t}^{\ast }=\delta \left( \frac{\lambda _{t}}{\sigma _{t}}+\frac{\rho }{%
1-\rho ^{2}}\frac{\xi _{t}}{\sigma _{t}}\right) ,  \label{a-x-s-optimal}
\end{equation}
and
\begin{equation}
 v(x) =-e^{-\frac{1}{\delta }x}M_{0}^{\frac{1}{1-\rho ^{2}}} = -e^{-\frac{1}{\delta}x} \left(E_{\QQ^{MM}}\left[e^{-\half(1-\rho^2)\int_0^T \lambda_s^2 \, ds}\right]\right)^{\frac{1}{1-\rho^2}},
 \label{v-optimal}
\end{equation}
with $\left( \xi
_{t}\right) _{t\in \left[ 0,T\right] }$ as in \eqref{def_xi}. 
\end{lem}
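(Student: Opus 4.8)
The plan is to use the martingale optimality principle (a verification argument). The objective in \eqref{v-classical} is under $\PP$, but the key simplification is that, writing $\mu_t=\lambda_t\sigma_t$, the controlled wealth obeys $\ud x_t = a_t\sigma_t\,\ud\widetilde W_t$ under $\QQ^{MM}$, so $x$ is a $\QQ^{MM}$-local martingale and the incompleteness is entirely encoded in the $\MCG$-adapted coefficients. I would therefore look for a value process of the product form $V_t = -e^{-\frac{1}{\delta}x_t}\,h_t$, where $(h_t)$ is a strictly positive $\MCG_t$-adapted process with terminal value $h_T=1$, designed so that $V_t$ is a $\PP$-supermartingale for every $a\in\MCA$ and a true $\PP$-martingale for one specific choice $a^\ast$. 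Granting this, the principle gives $v(x)=V_0=-e^{-\frac{1}{\delta}x}h_0$, and it remains only to identify $h_0 = M_0^{1/(1-\rho^2)}$.

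Second, I would determine $h$. Writing its dynamics as $\ud h_t = b_t\,\ud t + g_t\,\ud W_t^Y$ (it is $\MCG$-adapted, hence driven by $W^Y$) and applying It\^o's formula, the drift of $-V_t=e^{-x_t/\delta}h_t$ equals $e^{-x_t/\delta}$ times a convex quadratic in the control $a_t$, with leading coefficient $\sigma_t^2 h_t/(2\delta^2)>0$. Completing the square does two things at once: the minimizer is
\[
a_t^\ast = \delta\frac{\lambda_t}{\sigma_t} + \frac{\delta\rho}{\sigma_t}\frac{g_t}{h_t},
\]
and imposing that the minimal value vanish — so that the drift of $-V_t$ is nonnegative for every $a$ and zero at $a^\ast$, i.e. $V_t$ is a supermartingale and a martingale at $a^\ast$ — forces a scalar backward SDE for $h$, namely $b_t = \half\lambda_t^2 h_t + \rho\lambda_t g_t + \frac{\rho^2 g_t^2}{2h_t}$ with $h_T=1$. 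The crucial step is to solve this via the distortion (power) transformation: I claim
\[
h_t = e^{\half\int_0^t\lambda_s^2\,\ud s}\,M_t^{\frac{1}{1-\rho^2}}
\]
works, with $M$ the $\QQ^{MM}$-martingale of \eqref{martingale} satisfying $\ud M_t=\xi_t M_t\,\ud\widetilde W_t^Y$. Indeed $h_T=1$ because $M_T = e^{-\half(1-\rho^2)\int_0^T\lambda_s^2\,\ud s}$; computing $\ud h_t$ (using $\ud\widetilde W_t^Y = \ud W_t^Y + \rho\lambda_t\,\ud t$ under $\PP$) gives $g_t/h_t = \xi_t/(1-\rho^2)$, which reproduces exactly the claimed policy \eqref{a-x-s-optimal}; and the nonlinear drift condition holds precisely because the distortion exponent $q=1/(1-\rho^2)$ satisfies the algebraic identity $q(q-1)=q^2\rho^2$, which matches the It\^o second-order term $\half q(q-1)\xi_t^2$ against the square $\half q^2\rho^2\xi_t^2$ produced by completing the square. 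Setting $t=0$ gives $h_0 = M_0^{1/(1-\rho^2)}$, and since $\MCG_0$ is trivial, $M_0 = E_{\QQ^{MM}}[e^{-\half(1-\rho^2)\int_0^T\lambda_s^2\,\ud s}]$, yielding the second form in \eqref{v-optimal}.

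Third — and this is where the real work lies — I would make the super/martingale claims rigorous. One must check $a^\ast\in\MCA$, i.e. $E_{\PP}[\int_0^T\sigma_s^2(a_s^\ast)^2\,\ud s]<\infty$; this follows from the bounds $c\le\sigma_t\le C$ and $|\lambda_t|\le K$ of \eqref{lamda-bounds} together with $\xi\in\MCL^2(\PP)$. The more delicate point is upgrading the local-martingale statements to genuine (super)martingale statements, so that $V_0\ge E_{\PP}[V_T]$ for all admissible $a$ with equality at $a^\ast$. Since $V_t<0$, for arbitrary $a\in\MCA$ I would localize and combine the nonpositive drift with Fatou's lemma (applied to the nonpositive $V$) to obtain the supermartingale inequality; for $a^\ast$, I would verify an integrability/uniform-integrability bound — controlling $E_{\PP}[\,\sup_t e^{-x_t^\ast/\delta}h_t\,]$ through the boundedness of $\lambda$ and $\sigma$ and exponential-moment estimates for the Gaussian-type integrands — to conclude the true martingale property and hence equality. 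This integrability bookkeeping for the exponential criterion is the main obstacle; the algebra of the distortion transformation is routine once the identity $q(q-1)=q^2\rho^2$ is spotted.
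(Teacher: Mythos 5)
Your proposal is correct and follows essentially the same route as the paper: the paper's proof also posits the value process $u_t=-e^{-x_t/\delta}M_t^{1/(1-\rho^2)}e^{\frac12\int_0^t\lambda_u^2\,du}$ (your $h_t$ exactly), completes the square in the drift to read off $a^\ast$ and the supermartingale property, and checks admissibility of $a^\ast$ from the bounds on $\sigma$, $\lambda$ and $\xi\in\mathcal{L}^2(\mathbb{P})$. Your third paragraph merely spells out the localization/uniform-integrability details that the paper dismisses with ``the rest of the proof follows easily.''
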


\begin{proof}
 We only present the key steps, showing that the
process $\left( u_{t}\right) _{t\in \left[ 0,T\right] },$ 
\begin{equation*}
u_{t}:=-e^{-\frac{1}{\delta }x_{t}}\left( E_{\mathbb{Q}^{MM}}\left[ \left.
e^{-\frac{1}{2}(1-\rho ^{2})\int_{t}^{T}\lambda _{s}^{2}\,ds}\right \vert 
\mathcal{G}_{t}\right] \right) ^{\frac{1}{1-\rho ^{2}}},
\end{equation*}%
with $u_{0}=v(x) ,$ $x\in \mathbb{R},$ is a supermartingale for $%
x_{t}$ solving \eqref{x-s} for arbitrary $\alpha \in \mathcal{A}$ and
becomes a martingale for $\alpha ^{\ast }$ as in \eqref{a-x-s-optimal}. To
this end, we write 
\begin{equation*}
u_{t}=-e^{-\frac{x_{t}}{\delta }}M_{t}^{\frac{1}{1-\rho ^{2}}}e^{N_{t}}\quad 
\text{with \  \ }N_{t}=\frac{1}{2}\int_{0}^{t}\lambda _{u}^{2}\,du,
\end{equation*}%
and observe that 
\begin{align*}
\,du_{t}& =-\frac{u_{t}}{\delta }\,dx_{t}+\frac{1}{2\delta^2}u_t%
\,d\langle x\rangle _{t}+u_{t}\,dN_{t}+\frac{1}{1-\rho ^{2}}\frac{u_{t}}{%
M_{t}}\,dM_{t}\, \\
& \qquad +\frac{1}{2(1-\rho ^{2})}\frac{\rho ^{2}}{1-\rho ^{2}}\frac{u_{t}}{%
M_{t}^{2}}\,d\langle M\rangle _{t}-\frac{1}{\delta (1-\rho ^{2})}\frac{u_{t}%
}{M_{t}}\,d\langle x,M\rangle _{t} \\
& =u_{t}\left( -\frac{1}{\delta }a_{t}\mu _{t}+\frac{1}{2}\frac{1}{\delta
^{2}}a_{t}^{2}\sigma _{t}^{2}+\frac{1}{2}\lambda _{t}^{2}+\frac{\rho }{%
1-\rho ^{2}}\xi _{t}\lambda _{t}+\frac{\rho ^{2}}{2(1-\rho ^{2})^{2}}\xi
_{t}^{2}-\frac{\rho }{\delta (1-\rho ^{2})}a_{t}\sigma _{t}\xi _{t}\right) dt
\\
& \qquad +u_{t}\left( -\frac{1}{\delta }a_{t}\sigma _{t}\,dW_{t}+\frac{1}{%
1-\rho ^{2}}\xi _{t}\,dW_{t}^{Y}\right) \\
& =\frac{1}{2}u_{t}\left( -\frac{1}{\delta }\sigma _{t}a_{t}+\lambda _{t}+%
\frac{\rho }{1-\rho ^{2}}\xi _{t}\right) ^{2}dt+u_{t}\left( -\frac{1}{%
\delta }a_{t}\sigma _{t}\,dW_{t}+\frac{1}{1-\rho ^{2}}\xi
_{t}\,dW_{t}^{Y}\right).
\end{align*}%
Because $u_{t}<0,$ the drift remains non-positive and vanishes for $t\in %
\left[ 0,T\right] $ if and only if the policy 
\begin{equation*}
a_{t}^{\ast }=\delta \left( \frac{\lambda _{t}}{\sigma _{t}}+\frac{\rho }{%
1-\rho ^{2}}\frac{\xi _{t}}{\sigma _{t}}\right)
\end{equation*}%
is being used. Furthermore, $a^{\ast }\in \mathcal{A},$ as it follows from
the boundedness assumption on $\sigma$, inequality \eqref{lamda-bounds} and that $\xi
\in \mathcal{L}^{2}\left( \mathbb{P}\right) .$ The rest of the proof follows
easily. 
\end{proof}

Next, we present the first main result herein that yields the existence of a
(wealth-independent) stochastic Nash equilibrium.

\begin{prop}
For $\delta _{i}>0$ and $c_{i}\in (-\infty
,1]$, introduce the quantities 
\begin{equation}
\varphi _{N}:=\frac{1}{N}\sum_{i=1}^{N}\delta _{i}\text{ \  \  \  \ and \  \  \ }%
\psi _{N}:=\frac{1}{N}\sum_{i=1}^{N}c_{i},  \label{def_phipsi}
\end{equation}%
and 
\begin{equation}
\bar{\delta}_{i}:=\delta _{i}+\frac{\varphi _{N}}{1-\psi _{N}}c_{i}.
\label{risk-aversion-N}
\end{equation}%
The following assertions hold:

\begin{enumerate}
\item If $\psi _{N}<1$, there exists a wealth-independent Nash equilibrium, $%
\left( \pi _{t}^{\ast }\right) _{t\in \left[ 0,T\right] }=\left( \pi
_{t}^{1,\ast }, \ldots ,\pi _{t}^{i,\ast }, \ldots ,\pi _{t}^{N,\ast }\right) _{t\in %
\left[ 0,T\right] }$, where $\pi _{t}^{i,\ast }$, $i\in \mathcal{I}$, is
given by the $\mathcal{G}_{t}$-adapted process 
\begin{equation}
\pi _{t}^{i,\ast }=\bar{\delta}_{i}\left( \frac{\lambda _{t}}{\sigma _{t}}+%
\frac{\rho }{1-\rho ^{2}}\frac{\xi _{t}}{\sigma _{t}}\right) ,
\label{pi-optimal-1}
\end{equation}%
with $(\xi _{t}) _{t\in \left[ 0,T\right] }$ as in \eqref{def_xi}. The associated optimal wealth process $\left( X_{t}^{i,\ast }\right)
_{t\in \left[ 0,T\right] }$ is 
\begin{equation}
X_{t}^{i,\ast }=x_{i}+\bar{\delta}_{i}\int_{0}^{t}\left( \lambda _{u}+\frac{%
\rho }{1-\rho ^{2}}\xi _{u}\right) \left( \lambda _{u}\,du+\,dW_{u}\right)
\label{X-optimal-1}
\end{equation}%
and the game value for player $i$, $i\in \mathcal{I}$, is given by 
\begin{align}
V^{i}\left( x_{1},x_{2},\ldots ,x_{N}\right) &=-\exp \left( -\frac{1}{{\delta 
}_{i}}\left( x_{i}-c_{i}\bar{x}\right) \right) M_{0}^{\frac{1}{1-\rho ^{2}}} \nonumber\\
&= -\exp \left( -\frac{1}{{\delta 
}_{i}}\left( x_{i}-c_{i}\bar{x}\right) \right)\left(E_{\QQ^{MM}}\left[e^{-\half(1-\rho^2)\int_0^T \lambda_s^2 \,ds} \right]\right)^{\frac{1}{1-\rho^2}},
\label{V-opimal-1}
\end{align}%
with $\bar{x}=\frac{1}{N}\sum_{i=1}^{N}x_{i}$.

\item If $\psi _{N}=1$, then it must be that $c_{i}\equiv 1$, for all $i\in 
\mathcal{I}$, and there is no such wealth-independent Nash equilibrium.
\end{enumerate}
\end{prop}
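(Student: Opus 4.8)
The plan is to treat the claimed profile as the outcome of a best-response fixed point, reducing each player's sub-problem to the single-agent Lemma~\ref{lem:nocompetition}. Throughout, abbreviate $g_s := \lambda_s + \frac{\rho}{1-\rho^2}\xi_s$, so that \eqref{pi-optimal-1} reads $\pi_s^{i,*} = \bar{\delta}_i g_s/\sigma_s$ and the optimal wealth increment in \eqref{X-optimal-1} is $\int_0^t g_u(\lambda_u\,du+dW_u)$. I would fix $i\in\mathcal{I}$ and an arbitrary \emph{wealth-independent} profile $(\pi^{j,*})_{j\neq i}$ for the competitors. Because each such strategy is wealth-independent, the competitor's terminal wealth $X_T^{j,*} = x_j + \int_0^T \pi_s^{j,*}\sigma_s(\lambda_s\,ds+dW_s)$ is a stochastic integral that neither anticipates nor depends on player $i$'s control, so the peer-average term $C_T^{i,*}$ of Definition~\ref{def:Nash} is perfectly hedgeable by player $i$.

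Next I would expand $X_T^i - c_i C_T^{i,*} = (1-\frac{c_i}{N})X_T^i - \frac{c_i}{N}\sum_{j\neq i}X_T^{j,*}$ and merge the two stochastic integrals into a single one driven by $\lambda_s\,ds+dW_s$, with effective integrand $\tilde a_s\sigma_s$, where $\tilde a_s := (1-\frac{c_i}{N})\pi_s^i - \frac{c_i}{N}\sum_{j\neq i}\pi_s^{j,*}$; the remaining non-stochastic terms contribute only the constant $x_i - c_i\bar x$ to the exponent. Since $1-\frac{c_i}{N}\neq 0$ and $\sum_{j\neq i}\pi^{j,*}\in\mathcal{A}$, the affine map $\pi^i\mapsto\tilde a$ is a bijection of $\mathcal{A}$ onto itself, so up to the multiplicative factor $\exp(-\frac{1}{\delta_i}(x_i-c_i\bar x))$ player $i$'s problem becomes the single-agent problem \eqref{v-classical}, with risk tolerance $\delta_i$ and controlled process $\int_0^t\tilde a_s(\mu_s\,ds+\sigma_s\,dW_s)$. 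Lemma~\ref{lem:nocompetition} then gives the best response $\tilde a_s^* = \delta_i g_s/\sigma_s$ (admissible by the boundedness of $\sigma$, \eqref{lamda-bounds}, and $\xi\in\mathcal{L}^2(\PP)$, exactly as in Lemma~\ref{lem:nocompetition}), equivalently the identity
\begin{equation*}
\Bigl(1-\tfrac{c_i}{N}\Bigr)\pi_s^{i,*} - \tfrac{c_i}{N}\sum_{j\neq i}\pi_s^{j,*} = \delta_i\,\frac{g_s}{\sigma_s}.
\end{equation*}

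Writing $\Sigma_s := \sum_{j=1}^N\pi_s^{j,*}$, this identity rearranges to $\pi_s^{i,*} = \delta_i g_s/\sigma_s + \frac{c_i}{N}\Sigma_s$. Summing over $i\in\mathcal{I}$ and invoking \eqref{def_phipsi} collapses the system to the single scalar relation $(1-\psi_N)\Sigma_s = N\varphi_N\, g_s/\sigma_s$. When $\psi_N<1$ this yields $\Sigma_s = \frac{N\varphi_N}{1-\psi_N}g_s/\sigma_s$, and back-substitution gives $\pi_s^{i,*} = (\delta_i + c_i\frac{\varphi_N}{1-\psi_N})g_s/\sigma_s = \bar{\delta}_i g_s/\sigma_s$, which is \eqref{pi-optimal-1}; since every step is reversible, this profile does satisfy the best-response condition \eqref{Nash} for each $i$ and hence is a Nash equilibrium. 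Integrating \eqref{pi-optimal-1} against $\mu_s\,ds+\sigma_s\,dW_s$ produces \eqref{X-optimal-1}, and combining the value \eqref{v-optimal} of Lemma~\ref{lem:nocompetition} with the exponent constant $x_i-c_i\bar x$ identified above yields \eqref{V-opimal-1}.

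For the second assertion, $\psi_N = \frac1N\sum_i c_i = 1$ together with $c_i\le 1$ forces $c_i\equiv 1$. The summed relation then degenerates to $0 = N\varphi_N\, g_s/\sigma_s$; as $\varphi_N = \frac1N\sum_i\delta_i>0$ and $g/\sigma$ is not identically zero in the non-degenerate market, this is impossible, so no wealth-independent Nash equilibrium can exist. The step I expect to demand the most care is the reduction to Lemma~\ref{lem:nocompetition}: one must confirm that the peer term is genuinely unaffected by player $i$'s own control, that the affine change of control preserves admissibility in both directions, and that the multiplicative constant in the value is correctly pinned down as $\exp(-\frac{1}{\delta_i}(x_i-c_i\bar x))$ so that \eqref{V-opimal-1} follows.
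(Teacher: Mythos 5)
Your proposal is correct and follows essentially the same route as the paper: your effective control $\tilde a_s=(1-\tfrac{c_i}{N})\pi^i_s-\tfrac{c_i}{N}\sum_{j\neq i}\pi^{j,*}_s$ is exactly the paper's auxiliary control $\widetilde\pi^i$ for the shifted state $\tilde x^i_t=X^i_t-\tfrac{c_i}{N}\sum_j X^j_t$, and the reduction to Lemma~\ref{lem:nocompetition}, the best-response identity $\pi^{i,*}_t=\delta_i g_t/\sigma_t+\tfrac{c_i}{N}\Sigma_t$, and the averaging/back-substitution are identical. Your added care about the bijectivity and admissibility of the affine change of control, and the explicit degeneracy argument for $\psi_N=1$, merely fill in details the paper leaves to ``the rest of the proof follows easily.''
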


\begin{proof}
 We first solve the individual optimization problem %
\eqref{i-utility} for player $i\in \mathcal{I}$, taking the (arbitrary)
strategies $(\pi ^{1},\ldots ,\pi ^{i-1},\pi ^{i+1},\ldots ,\pi ^{N})$ of
all other players as given. This problem can be
alternatively written as 
\begin{equation}
v^{i}\left( \tilde{x}_{i}\right) =\sup_{\widetilde{\pi }^{i}\in \mathcal{A}}%
\mathbb{E}_{\mathbb{P}}\left[ \left. -\exp \left( -\frac{1}{\delta _{i}}%
\tilde{x}_{T}^{i}\right) \right \vert \tilde{x}_{0}^{i}=\tilde{x}_{i}\right]
,  \label{i-utility-aux}
\end{equation}%
where $\tilde{x}_{t}^{i}:=X_{t}^{i}-\frac{c_{i}}{N}\sum_{j=1}^{N}X_{t}^{j}$, 
$t\in \left[ 0,T\right] ,$ solves 
\begin{equation*}
\,d\tilde{x}_{t}^{i}=\widetilde{\pi }_{t}^{i}\left( \mu _{t}\,dt+\sigma
_{t}\,dW_{t}\right) \text{ \  \  \ and \  \ }\tilde{x}_{0}^{i}=\tilde{x}%
_{i}:=x_{i}-c_{i}\bar{x}.
\end{equation*}%
From Lemma~\ref{lem:nocompetition}, we deduce that its optimal policy is given by 
\begin{equation*}
\widetilde{\pi }_{t}^{i,\ast }=\delta _{i}\left( \frac{\lambda _{t}}{\sigma
_{t}}+\frac{\rho }{1-\rho ^{2}}\frac{\xi _{t}}{\sigma _{t}}\right) ,
\end{equation*}%
and thus the optimal policy of \eqref{i-utility} can be written as 
\begin{equation}
\pi _{t}^{i,\ast }=\delta _{i}\left( \frac{\lambda _{t}}{\sigma _{t}}+\frac{%
\rho }{1-\rho ^{2}}\frac{\xi _{t}}{\sigma _{t}}\right) +\frac{c_{i}}{N}%
\left( \sum_{j\neq i}\pi _{t}^{j}+\pi _{t}^{i,\ast }\right) .  \label{pi-*}
\end{equation}%
Symmetrically, all players $j\in \mathcal{I}$ follow an analogous to %
\eqref{pi-*} strategy. Averaging over $j\in \mathcal{I}$ yields 
\begin{equation*}
\frac{1}{N}\sum_{i=1}^{N}\pi _{t}^{i,\ast }=\psi _{N}\frac{1}{N}%
\sum_{i=1}^{N}\pi _{t}^{i,\ast }+\varphi _{N}\left( \frac{\lambda _{t}}{%
\sigma _{t}}+\frac{\rho }{1-\rho ^{2}}\frac{\xi _{t}}{\sigma _{t}}\right) ,
\end{equation*}%
with $\psi _{N}$ and $\varphi _{N}$ as in \eqref{def_phipsi}. 
If $\psi _{N}<1$, the above equation gives 
\begin{equation*}
\frac{1}{N}\sum_{i=1}^{N}\pi _{t}^{i,\ast }=\frac{\varphi _{N}}{1-\psi _{N}}%
\left( \frac{\lambda _{t}}{\sigma _{t}}+\frac{\rho }{1-\rho ^{2}}\frac{\xi
_{t}}{\sigma _{t}}\right) ,
\end{equation*}%
and we obtain \eqref{pi-optimal-1}. The rest of the proof follows easily. 
\end{proof}

We have stated the above result assuming that we start at $t=0$. This is
without loss of generality, as all arguments may be modified accordingly.
For completeness, we present in the
sequel the time-dependent case, in the context of a Markovian market. 

\begin{rem}
As discussed in \cite[Remark~2.5]{LaZa:17}, problem \eqref{i-utility} may be alternatively and equivalently represented
as 
\begin{equation*}
V^{i}(x_{1}, \ldots, x_{N}) =\sup_{\pi ^{i}\in \mathcal{A}}E_{\mathbb{%
P}}\left[ \left. -\exp \left( -\frac{1}{\delta _{i}^{\prime }}\left(
X_{T}^{i}-c_{i}^{\prime }C_{T}^{-i}\right) \right) \right \vert
X_{0}^{1}=x_{1}, \ldots, X_{0}^{i}=x_{i}, \ldots, X_{0}^{N}=x_{N}\right] ,
\end{equation*}%
with $C_{T}^{-i}:=\frac{1}{N-1}\sum_{j=1,j\neq i}^{N}X_{T}^{j}$, and $\delta
_{i}=\frac{\delta _{i}^{\prime }}{1+\frac{1}{N-1}c_{i}^{\prime }}$ and $%
c_{i}=\frac{c_{i}^{\prime }}{\frac{N-1}{N}+\frac{c_{i}^{\prime }}{N}}$.
\end{rem}

\begin{rem}
Instead of working with the minimal martingale measure in
the incomplete It\^{o}-diffusion market herein, one may employ the minimal
entropy measure, $\mathbb{Q}^{ME}$, given by 
\begin{equation}\label{def_ME}
\frac{d\mathbb{Q}^{ME}}{d\mathbb{P}}=\exp \left( -\frac{1}{2}%
\int_{0}^{T}\left( \lambda _{s}^{2}+\chi _{s}^{2}\right)
\,ds-\int_{0}^{T}\lambda _{s}\,dW_{s}-\int_{0}^{T}\chi _{s}\,dW_{s}^{\perp
}\  \right) ,
\end{equation}%
where $\chi _{t}= -Z_{t}^{\perp }$ and $\left( y_{t},Z_{t},Z_{t}^{\perp
}\right) _{t\in \left[ 0,T\right] }$ solves the backward stochastic differential equation (BSDE)
\begin{equation}\label{ME_BSDE}
- dy_{t}=\left( -\frac{1}{2}\lambda _{t}^{2}+\frac{1}{2}(Z_{t}^{\perp
})^{2}-\lambda _{t}Z_{t}\right) dt-\left( Z_{t}\,dW_{t}+Z_{t}^{\perp
}\,dW_{t}^{\perp }\right) \text{ \ and \  \ }y_{T}=0.
\end{equation}%
The measures $\mathbb{Q}^{ME}$ and $\mathbb{Q}^{MM}$ are related through the
relative entropy $\mathcal{H}$ in that $-\mathcal{H}(\mathbb{Q}^{ME}|\mathbb{%
P})=\frac{1}{1-\rho ^{2}}\ln M_{0}$ (cf. \cite{RoEl:00}). We choose to work
with $\mathbb{Q}^{MM}$ for ease of the presentation.
\end{rem}

From Lemma~\ref{lem:nocompetition}, we see that the
Nash equilibrium process, 
\begin{equation*}
\pi _{t}^{i,\ast }=\bar{\delta}_{i}\left( \frac{\lambda _{t}}{\sigma _{t}}+%
\frac{\rho }{1-\rho ^{2}}\frac{\xi _{t}}{\sigma _{t}}\right),
\end{equation*}%
resembles the optimal policy of an individual player of the classical optimal
investment problem with exponential utility and \textit{modified} risk tolerance, $\bar{\delta}_{i}.$
The latter deviates from $\delta _{i}$ by 
\begin{equation*}
\bar{\delta}_{i}-\delta _{i}=\frac{\varphi _{N}}{1-\psi _{N}}c_{i}.
\end{equation*}

In the competitive case, $c_{i}>0$, $\bar{\delta}_{i}>\delta _{i}$ and
their difference increases with $c_{i}$, $\varphi _{N}$ and $\psi _{N}$. At
times $t$ such that $\frac{\lambda _{t}}{\sigma _{t}}+\frac{\rho }{1-\rho
^{2}}\frac{\xi _{t}}{\sigma _{t}}>0$ (resp. $\frac{\lambda _{t}}{\sigma _{t}}%
+\frac{\rho }{1-\rho ^{2}}\frac{\xi _{t}}{\sigma _{t}}$ $ < 0)$, the
competition concerns make the player  invest more (resp. less) in the risky
asset than without such concerns.

In the homophilous case, $c_{i}<0$, we have that $\bar{\delta}_{i}<\delta _{i}$. Furthermore, direct computations show
that their difference decreases with $\delta _{i}$ and each $c_{j}$, $j\neq
i$, while it increases with $c_{i}$. In other words, 
\begin{equation*}
\partial _{\delta _{j}}\left( \bar{\delta}_{i}-\delta _{i}\right) <0,\text{ }%
\forall j\in \mathcal{I},\text{ \  \ }\partial _{c_{j}}\left( \bar{\delta}%
_{i}-\delta _{i}\right) <0,\text{ }\forall j\in \mathcal{I\smallsetminus }%
\left \{ i\right \} \text{, \  \ and \ \ }\partial _{c_{i}}\left( \bar{\delta}%
_{i}-\delta _{i}\right) >0.
\end{equation*}%
At times $t$ such that $\frac{\lambda _{t}}{\sigma _{t}}+\frac{\rho }{1-\rho
^{2}}\frac{\xi _{t}}{\sigma _{t}}>0,$ the player would invest less in the
risky asset, compared to without homophilous interaction. This investment decreases if other players become more risk tolerant (their $\delta
_{j}^{\prime }$ $s$ increase) or less homophilous (their $c_{j}^{\prime }$ $%
s $ increase)  or if the specific player $i$ becomes more homophilous ($c_{i}$
decreases). The case $\frac{\lambda _{t}}{\sigma _{t}}+\frac{\rho }{1-\rho
^{2}}\frac{\xi _{t}}{\sigma _{t}}<0$ follows similarly. The comparison
between the competitive and the homophilous case is described in Figure 1.

\begin{figure}[h]
	\centering
	\includegraphics[width = 0.9\textwidth, trim = {14em 4em 14em 6em}, clip, keepaspectratio=True]{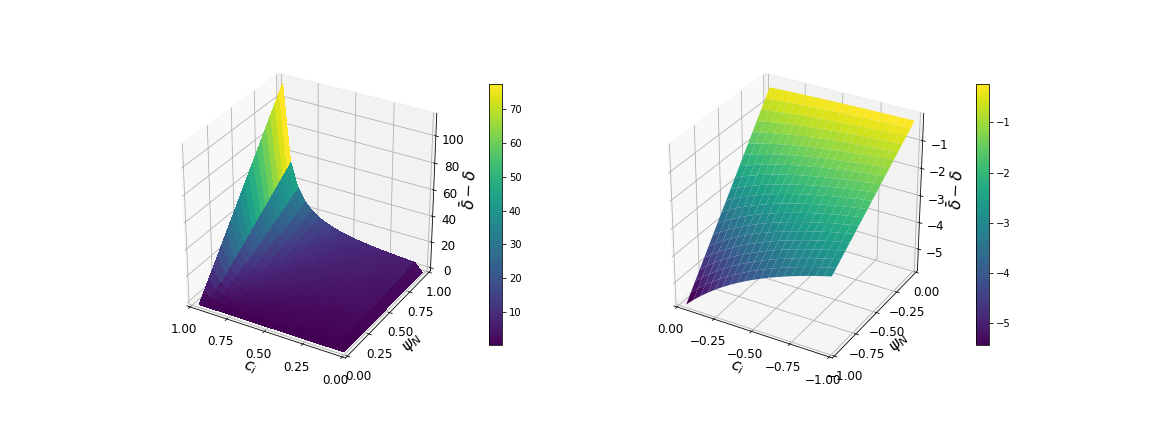}
	\caption{The plot of $\bar \delta_i - \delta_i$ versus $c_i$ and $\psi_N$, with $N = 25$ and $\varphi_N = 6$.}\label{fig:diff}
\end{figure}

\subsubsection{The Markovian case}

\label{sec_markov}

We consider a single stochastic factor model in which the stock price process $\left( S_{t}\right) _{t\in \left[ 0,T\right] }$ solves 
\begin{align}
\,dS_{t}& =\mu (t,Y_{t})S_{t}\,dt+\sigma (t,Y_{t})S_{t}\,dW_{t},
\label{def_St} \\
\,dY_{t}& =b(t,Y_{t})\,dt+a(t,Y_{t})\,dW_{t}^{Y},  \label{def_Yt}
\end{align}%
with $S_{0}=S>0$ and $Y_{0}=y\in \mathbb{R}$. The market coefficients $\mu, \sigma, a$ and  $b$ satisfy appropriate conditions for these
equations to have a unique strong solution. Further conditions, added next, are needed for the validity of the Feynman-Kac formula in Proposition~\ref{prop:markovian-incomplete}.

\begin{assump}\label{assump:incomplete}
The coefficients $\mu, \sigma, a$ and $b$ are bounded
functions, and $a, b$ have bounded, uniformly in $t$, $y$-derivatives. It is further assumed
that the Sharpe ratio function $\lambda (t,y):=\frac{\mu \left( t,y\right) 
}{\sigma \left( t,y\right) }$ is bounded and with bounded, uniformly in $t$,
$y$-derivatives of any order.
\end{assump}

For $t\in \left[ 0,T\right] ,$ we consider the optimization
problem 
\begin{multline}
V^{i}(t, x_{1}, \ldots, x_{i}, \ldots, x_{N}, y) \\=\sup_{\pi ^{i}\in \mathcal{A%
}}E_{\mathbb{P}}\left[ \left. -\exp \left( -\frac{1}{\delta _{i}}\left(
X_{T}^{i}-c_{i}C_{T}\right) \right) \right \vert
X_{t}^{1}=x_{1}, \ldots, X_{t}^{i}=x_{i}, \ldots, X_{t}^{N}=x_{N}, Y_{t}=y\right] ,
\label{V-markovian}
\end{multline}%
with $( X^{i}_s)_{s\in \left[ t,T\right] }$ solving $%
dX_{s}^{i}=\mu (s,Y_{s})\pi _{s}^{i}\,ds+\sigma (s,Y_{s})\pi
_{s}^{i}\,dW_{s} $ and $\pi ^{i}\in \mathcal{A},$ and $C_{T}$ as in \eqref{C-process}. We also consider the process $(\zeta _{t})_{t\in \left[ 0,T\right] }$
with $\zeta _{t}:=\zeta (t,Y_{t})$, where $\zeta :\left[ 0,T\right] \times 
\mathbb{R\rightarrow R}^{+}$ is defined as
\begin{equation*}
\zeta (t,y)=E_{\mathbb{Q}^{MM}}\left[ \left. e^{-\frac{1}{2}(1-\rho
^{2})\int_{t}^{T}\lambda ^{2}(s,Y_{s})\,ds}\right \vert Y_{t}=y\right] .
\end{equation*}%
Under $\mathbb{Q}^{MM}$, the stochastic factor process $\left( Y_{t}\right)
_{t\in \left[ 0,T\right] }$ satisfies 
\begin{equation*}
\,dY_{t}=( b(t,Y_{t})-\rho \lambda (t,Y_{t})a(t,Y_{t}))
\,dt+a(t,Y_{t})\,d\widetilde{W}_{t}^{Y}.
\end{equation*}%
Thus, using the conditions on the market coefficients and the Feynman-Kac
formula, we deduce that $\zeta (t,y)$ solves 
\begin{equation}
\zeta _{t}+\frac{1}{2}a^{2}(t,y)\zeta _{yy}+(b(t,y)-\rho \lambda
(t,y)a(t,y))\zeta _{y}=\frac{1}{2}(1-\rho ^{2})\lambda ^{2}(t,y)\zeta ,
\label{eq_zeta}
\end{equation}%
with $\zeta (T,y)=1.$ In turn, the function $f(t,y):=\frac{1}{1-\rho ^{2}}%
\ln \zeta (t,y)$ satisfies
\begin{equation}
f_{t}+\frac{1}{2}a^{2}(t,y)f_{yy}+(b(t,y)-\rho \lambda (t,y)a(t,y))f_{y}+%
\frac{1}{2}(1-\rho ^{2})a^{2}(t,y)f_{y}{}^{2}=\frac{1}{2}\lambda
^{2}(t,y),\quad f(T,y)=0.  \label{eq_f}
\end{equation}%
In the absence of competitive/homophilous interaction, this
problem has been examined by various authors (see, for example, \cite{SiZa:05}).

\begin{prop}\label{prop:markovian-incomplete}
Under Assumption~\ref{assump:incomplete}, the following
assertions hold for $t\in \left[ 0,T\right] .$
\begin{enumerate}
	\item  If $\psi _{N}<1$, there exists a wealth-independent Nash equilibrium $\left( \pi ^{\ast }_s\right)_{_{s\in \left[ t,T\right] }}=\left( \pi ^{1,\ast}_s, \ldots, \pi ^{i,\ast }_s, \ldots, \pi ^{N,\ast }_s\right) _{_{s\in \left[ t,T\right] }},$
where $\pi _{s}^{i,\ast }$,  $i\in \mathcal{I}$, is given by the process 
\begin{equation}
\pi _{s}^{i,\ast }=\pi ^{i,\ast }(s,Y_{s}),  \label{pi-Markovian}
\end{equation}%
with $(Y_t)_{t \in [0,T]}$ solving \eqref{def_Yt} and $\pi ^{i,\ast }:\left[ 0,T\right] \times 
\mathbb{R\rightarrow R}$ defined as 
\begin{equation}
\pi ^{i,\ast }(t,y):=\bar{\delta}_{i}\left( \frac{\lambda (t,y)}{\sigma (t,y)%
}+\rho \frac{a(t,y)}{\sigma (t,y)}f_{y}(t,y)\right) ,
\label{def_pi_markovian}
\end{equation}%
with $\bar{\delta}_{i}$ as in \eqref{risk-aversion-N} and $f(t,y)$ solving %
\eqref{eq_f}. The game value of player $i$, $i\in \mathcal{I}$, is given by 
\begin{align*}
V^{i}(t, x_{1}, \ldots, x_{N}, y) &=-\exp \left( -\frac{1}{\delta _{i}}%
\left( x_{i}-\frac{c_{i}}{N}\Sigma _{i=1}^{N}x_{i}\right) \right) \zeta
(t,y)^{\frac{1}{1-\rho ^{2}}} \\
&=-\exp \left(  -\frac{1}{\delta _{i}}\left( x_{i}-\frac{c_{i}}{N}%
\Sigma _{i=1}^{N}x_{i}\right) +f\left( t,y\right) \right) .
\end{align*}

\item If $\psi _{N}=1$, there exists no such Nash equilibrium.
\end{enumerate}
\end{prop}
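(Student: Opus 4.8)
The plan is to mirror the fixed-point argument of the non-Markovian Proposition, the only genuinely new ingredient being the passage from the abstract martingale coefficient $\xi_t$ in \eqref{def_xi} to its Markovian representation through $f_y$. First I would establish the single-player Markovian analogue of Lemma~\ref{lem:nocompetition}: for the shifted state $\tilde x^i_t := X^i_t - \tfrac{c_i}{N}\sum_{j=1}^N X^j_t$, the candidate value function is $v^i(t,\tilde x,y) = -\exp(-\tilde x/\delta_i)\,\zeta(t,y)^{1/(1-\rho^2)} = -\exp(-\tilde x/\delta_i + f(t,y))$, with $\zeta$ solving \eqref{eq_zeta} and $f$ solving \eqref{eq_f}. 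Under Assumption~\ref{assump:incomplete}, the Feynman-Kac formula guarantees that $\zeta$ (hence $f$) is a classical, bounded solution with bounded, uniformly-in-$t$ $y$-derivatives, so that the candidate feedback control is well defined and admissible.

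The key computation links $\xi_t$ to $f_y$. Writing $M_t = \exp(-\tfrac12(1-\rho^2)\int_0^t \lambda^2(s,Y_s)\,ds)\,\zeta(t,Y_t)$ and noting that $(Y_t)$ has $\mathbb{Q}^{MM}$-diffusion coefficient $a(t,Y_t)$ along $\widetilde W^Y$, an It\^o expansion whose drift vanishes precisely by \eqref{eq_zeta} gives $dM_t = a(t,Y_t)\,\zeta_y(t,Y_t)\,e^{-\frac12(1-\rho^2)\int_0^t\lambda^2\,ds}\,d\widetilde W^Y_t$. Comparing with \eqref{def_xi} and using $\zeta_y/\zeta = (1-\rho^2) f_y$, I obtain $\xi_t = (1-\rho^2)\,a(t,Y_t)\,f_y(t,Y_t)$. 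Substituting this into the equilibrium policy of the non-Markovian Proposition, the term $\tfrac{\rho}{1-\rho^2}\tfrac{\xi_t}{\sigma_t}$ collapses to $\rho\,\tfrac{a}{\sigma}f_y$, yielding exactly \eqref{def_pi_markovian}. The verification itself proceeds as in Lemma~\ref{lem:nocompetition}: expanding $u_t := -\exp(-\tilde x^i_t/\delta_i + f(t,Y_t))$ by It\^o, the $dt$-term is a perfect square (non-positive since $u_t<0$) that vanishes iff the feedback control is used; admissibility and the boundedness of $f_y$ upgrade the resulting local martingale to a true martingale, delivering optimality.

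With the single-player solution in hand, the game is closed by the same averaging step. Taking the other players' strategies as fixed, player $i$'s best response is $\pi^{i,*}(t,y) = \delta_i(\tfrac{\lambda}{\sigma} + \rho\tfrac{a}{\sigma}f_y) + \tfrac{c_i}{N}\sum_{j=1}^N \pi^{j}(t,y)$; imposing the symmetric relation for all $j$ and averaging gives $\tfrac1N\sum_i \pi^{i,*} = \psi_N\,\tfrac1N\sum_i\pi^{i,*} + \varphi_N(\tfrac{\lambda}{\sigma}+\rho\tfrac{a}{\sigma}f_y)$, with $\varphi_N,\psi_N$ as in \eqref{def_phipsi}. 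When $\psi_N<1$ this is solved for the average and back-substituted, producing $\bar\delta_i$ as in \eqref{risk-aversion-N} and the claimed policy; the game value follows by evaluating $v^i$ at $\tilde x_i = x_i - \tfrac{c_i}{N}\sum_j x_j$. When $\psi_N=1$, the constraint $c_i\le 1$ forces $c_i\equiv 1$, and the averaged identity degenerates to $0 = \varphi_N(\tfrac{\lambda}{\sigma}+\rho\tfrac{a}{\sigma}f_y)$, which cannot hold since $\varphi_N>0$, so no wealth-independent equilibrium exists.

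\textbf{The main obstacle} is the regularity and admissibility underpinning the verification: invoking Feynman-Kac under Assumption~\ref{assump:incomplete} to obtain a classical solution $f$ of \eqref{eq_f} with bounded $f_y$ (so that the feedback control lies in $\mathcal{A}$), and then justifying that the exponential local martingale arising in the It\^o expansion is a genuine $\mathbb{P}$-martingale, via the boundedness of $\lambda$, $a/\sigma$ and $f_y$ together with a Novikov-type or uniform-integrability argument. Everything else is an algebraic repetition of the non-Markovian case, now specialized through the explicit identity $\xi_t = (1-\rho^2)\,a(t,Y_t)\,f_y(t,Y_t)$.
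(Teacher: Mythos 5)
Your proposal is correct and follows essentially the same route as the paper: the heart of the argument is the identification $\xi_t=(1-\rho^2)a(t,Y_t)f_y(t,Y_t)$, obtained by applying It\^{o}'s formula to $M_t=\zeta(t,Y_t)e^{-\frac12(1-\rho^2)\int_0^t\lambda^2(s,Y_s)\,ds}$ and using the PDE \eqref{eq_zeta} to cancel the drift, after which one substitutes into \eqref{pi-optimal-1} and checks admissibility via the boundedness of $f_y$ (from $\zeta$ bounded away from zero and $\zeta_y$ bounded). The only difference is presentational: you re-run the verification and averaging arguments in Markovian form, whereas the paper simply invokes the already-proved non-Markovian proposition.
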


\begin{proof}
To ease the notation, we establish the results when $t=0$ in \eqref{V-markovian}. To this end, we first identify the process $\xi $ in \eqref{def_xi}. For this, we rewrite the martingale in \eqref{martingale} as 
\begin{equation*}
M_{t}=\zeta (t,Y_{t})e^{-\frac{1}{2}(1-\rho ^{2})\int_{0}^{t}\lambda
^{2}(s,Y_{s})\,ds},
\end{equation*}
and observe that 
\begin{align}
\,dM_{t}& =\left( \zeta _{t}(t,Y_{t})+(b(t,Y_{t})-\rho a(t,Y_{t})\lambda
(t,Y_{t}))\zeta _{y}(t,Y_{t})+\frac{1}{2}a^{2}(t,Y_{t})\zeta
_{yy}(t,Y_{t})\right) \frac{M_{t}}{\zeta (t,Y_{t})}\,dt  \notag \\
& \qquad -\frac{1}{2}(1-\rho ^{2})\lambda ^{2}(t,Y_{t})M_{t}\,dt+a(t,Y_{t})%
\frac{\zeta _{y}(t,Y_{t})}{\zeta (t,Y_{t})}M_{t}\left( \rho \,d\widetilde{W}%
_{t}+\sqrt{1-\rho ^{2}}\,dW_{t}^{\perp }\right) \\
& =a(t,Y_{t})\frac{\zeta _{y}(t,Y_{t})}{\zeta (t,Y_{t})}M_{t}\left( \rho \,d%
\widetilde{W}_{t}+\sqrt{1-\rho ^{2}}\,dW_{t}^{\perp }\right) ,
\end{align}%
where we used that $\zeta(t,y)$ satisfies \eqref{eq_zeta}. Therefore, $\xi
_{t}=a(t,Y_{t})\frac{\zeta _{y}(t,Y_{t}) }{\zeta(
t,Y_{t}) }$. In turn, using that $\zeta (t,y)^{1/(1-\rho
^{2})}=e^{f(t,y)},$ we obtain that 
\begin{equation*}
f_{y}(t,Y_{t})=\frac{1}{1-\rho ^{2}}\frac{\zeta _{y}(t,y) }{%
\zeta (t,y) }\text{ \  \ and \  \ }\xi _{t}=(1-\rho
^{2})a(t,Y_{t})f_{y}(t,Y_{t}),
\end{equation*}%
and we easily conclude by replacing $\xi _{t}$ by $(1-\rho
^{2})a(t,Y_{t})f_{y}(t,Y_{t})$ in \eqref{pi-optimal-1}.

It remains to show that the candidate investment process in \eqref{pi-Markovian} is admissible. Under Assumption~\ref{assump:incomplete} we deduce that $f_{y}( t,y)$ is a bounded function, since $\zeta (t,y)$
is bounded away from zero and $\zeta _{y}(t,y) $ is bounded. We easily conclude.
\end{proof}

\begin{rem}

 In the Markovian model \eqref{def_St}--\eqref{def_Yt}, the density of the minimal entropy
measure $\QQ^{ME}$ is fully specified. Indeed, the BSDE \eqref{ME_BSDE} admits the
solution 
\begin{equation*}
y_{t}=f(t,Y_{t}),\quad Z_{t}=\rho a(t,Y_{t})f_{y}(t,Y_{t})\text{ \  \ and
\  \ }Z_{t}^{\perp }=\sqrt{1-\rho ^{2}}a(t,Y_{t})f_{y}(t,Y_{t}),
\end{equation*}%
and, thus, the density of $\mathbb{Q}^{ME}$ is
given by \eqref{def_ME} with $\chi _{t}\equiv \chi (t,Y_{t})=-\sqrt{1-\rho
^{2}}a(t,Y_{t})f_{y}(t,Y_{t})$. 
\end{rem}

\subsubsection{A fully solvable example}

Consider the family of models with autonomous dynamics 
\begin{equation*}
\mu (t,y)=\mu y^{\frac{1}{2\ell }+\frac{1}{2}}\text{, \ }\sigma (t,y)=y^{%
\frac{1}{2\ell }},\text{ \ }b(t,y)=m-y,\quad a(t,y)=\beta \sqrt{y},\text{\ }
\end{equation*}%
with $\mu >0,$ $\beta >0,$ $\ell\neq 0$ and $m>\frac{1}{2}\beta ^{2}$. Notable
cases are $\ell =1,$ which corresponds to the Heston stochastic volatility
model, and $\ell =-1$ that is studied in \cite{ChVi:05}.

Equation \eqref{eq_f} depends only on $b(t,y)$, $a(t,y)$ and the Sharpe
ratio $\lambda (t,y)=$ $\mu \sqrt{y}$, and thus its solution $f(t,y)$ is
independent of the parameter $\ell $. Using the ansatz $f(t,y)=p(t)y+q(t)$ with $p(T)=q(T)=0$, we deduce from 
\eqref{eq_f} that $p(t)$ and $q(t)$ satisfy 
\begin{align}
& \dot{p}(t)-\frac{1}{2}(\mu +\rho \beta p(t))^{2}-p(t)+\frac{1}{2}\beta
^{2}p^{2}(t)=0,  \notag \\
& \dot{q}(t)+mp(t)=0.
\end{align}%
In turn, 
\begin{equation*}
p(t)=\frac{1+\rho \mu \beta -\sqrt{\Delta }}{(1-\rho ^{2})\beta ^{2}}\frac{%
1-e^{-\sqrt{\Delta }(T-t)}}{1-\frac{1+\rho \mu \beta -\sqrt{\Delta }}{1+\rho
\mu \beta +\sqrt{\Delta }}e^{-\sqrt{\Delta }(T-t)}},\quad \Delta =1+\beta
^{2}\mu ^{2}+2\rho \mu \beta >0,
\end{equation*}%
and $q(t)=m\int_{t}^{T}p(s)ds$.

From \eqref{def_pi_markovian}, we obtain that the Nash equilibrium strategy 
$\left( \pi _{s}^{i,\ast }\right) _{s\in \left[ t,T\right] },$ $t\in \left[
0,T\right] ,$ for player $i$ is given by the process 
\begin{equation*}
\pi _{s}^{i,\ast }=\bar{\delta}_{i}(\mu +\rho \beta p(s))Y_{s}^{\frac{1}{2}%
(1-\frac{1}{\ell })}.
\end{equation*}%
If $\ell =1$, the policy becomes deterministic, $\pi _{s}^{i,\ast }=\bar{\delta}_{i}(\mu +\rho \beta p(s)),$ and the equilibrium wealth process
solves 
\begin{equation*}
\,dX_{s}^{i,\ast }=\overline{\delta }_{i}(\mu +\rho \beta p(s))(\mu
Y_{s}\,ds+\sqrt{Y_{s}}\,dW_{s}).
\end{equation*}

\subsection{The common-noise MFG}\label{sec_MFG-incomplete}

We analyze the limit as $N\uparrow \infty $ of the $N$-player game studied in
Section~\ref{sec_Nagent}. We first give an intuitive and informal argument that
leads to a candidate optimal strategy in the mean-field setting, and then
propose a rigorous formulation for the MFG. The analysis follows closely the
arguments developed in \cite{LaZa:17}. 

For the $N$-player game, we denote by $\eta _{i}=(x_{i},\delta _{i},c_{i})$
the \emph{type vector} for player $i$, where $x_{i}$ is her initial wealth,
and $\eta _{i}$ and $c_{i}$ are her risk tolerance
coefficient and interaction parameter, respectively. Such type vectors induce an empirical
measure $m_{N}$, called the \emph{type distribution}, 
\begin{equation*}
m_{N}(A)=\frac{1}{N}\sum_{i=1}^{N}\mathbf{1}_{\eta _{i}}(A),\text{ for Borel
sets }A\subset \mathcal{Z},
\end{equation*}%
which is a probability measure on the space $\mathcal{Z}:=\mathbb{R}\times
(0,\infty )\times (-\infty, 1]$.

We recall ({\it cf.} \eqref{pi-optimal-1}) that the equilibrium strategies $(\pi _{t}^{i,\ast }) _{t\in \left[ 0,T\right] }$, $i\in \mathcal{I}$, 
are given as the product of the common (type-independent) process $\frac{\lambda_t}{\sigma_t}
+\frac{\rho }{1-\rho ^{2}}\frac{\xi _{t}}{\sigma_t}$ and the modified risk tolerance
parameter $\bar{\delta}_{i}=\delta _{i}+\frac{\varphi _{N}}{1-\psi _{N}}%
c_{i} $. Therefore, it is \textit{only} the coefficient $\bar{\delta}_{i}$
that depends on the empirical distribution $m_{N}$ through $\psi _{N}$ and $%
\varphi _{N}$, as both these quantities can be obtained by averaging
appropriate functions over $m_{N}$. Therefore, if we assume that $m_{N}$
converges weakly to some limiting probability measure as $N\uparrow \infty $%
, we should intuitively expect that the corresponding equilibrium strategies
also converge. This is possible, for instance, by letting the type vector $%
\eta =(x,\delta ,c)$ be a random variable in the space $\mathcal{Z}$ with
limiting distribution $m$, and take $\eta _{i}$ as i.i.d. samples of $\eta $%
. The sample $\eta _{i}$ is drawn and assigned to player $i$ at initial time $%
t=0.$ We would then expect $( \pi ^{i,\ast })_{t\in \left[ 0,T%
\right] }$ to converge to the process 
\begin{equation}
\lim_{N\uparrow \infty }\pi _{t}^{i,\ast }=\left( \delta _{i}+\frac{\bar \delta}{1-%
\bar{c}}c_i\right) \left( \frac{\lambda _{t}}{\sigma _{t}}+\frac{%
\rho }{1-\rho ^{2}}\frac{\xi _{t}}{\sigma _{t}}\right),  \label{def_pi_mfg}
\end{equation}%
where $\bar{c}$ and $\bar{\delta}$ represent the average interaction and
risk tolerance coefficients.

Next, we introduce the mean-field game in the incomplete It\^{o}-diffusion
market herein, and we show that \eqref{def_pi_mfg} indeed arises
as its equilibrium strategy. We model a single representative player, whose
type vector is a random variable with distribution $m$, and all players in
the continuum act in this common incomplete market.

\subsubsection{The It\^{o}-diffusion common-noise MFG}

\label{sec_MFG_general} To describe the heterogeneous population of players, we
introduce the type vector 
\begin{equation}
\eta =(x,\delta ,c)\in \mathcal{Z},  \label{type-vector-case1}
\end{equation}%
where $\delta >0$ and $c\in \left( -\infty ,1\right] $ represent the risk
tolerance coefficient and interaction parameter, and $x$ is the initial wealth.
This type vector is assumed to be independent of both $W$ and $W^{Y}$, which
drive the stock price process \eqref{stock-1}, and is assumed to have finite second moments.  

To formulate the mean-field portfolio game, we now let the filtered
probability space $(\Omega ,\mathcal{F},\mathbb{P}) $ support
$W,$ $W^{Y}$ as well as $\eta $. We assume that $\eta$ has second moments under $\PP$.
 We denote by $(\mathcal{F}_{t}^{MF})_{t\in \lbrack 0,T]}$ the smallest filtration satisfying the usual
assumptions for which $\eta$ is $\mathcal{F}_{0}^{MF}$-measurable and both 
$W,W^{Y}$ are adapted. As before, we denote by $(\mathcal{F}_{t})_{t\in \lbrack 0,T]}$ the natural filtration generated by $W$ and $%
W^{Y},$ and by $(\mathcal{G}_{t})_{t\in \lbrack 0,T]}$ the one
generated only by $W^{Y}.$

We also consider the wealth process $\left( X_{t}\right) _{t\in \left[ 0,T%
\right] }$ of the \emph{representative player} solving 
\begin{equation}
\,dX_{t}=\pi _{t}\left( \mu _{t}\,dt+\sigma _{t}\,dW_{t}\right) ,
\label{X-t-MFG}
\end{equation}%
with $X_{0}=x\in \mathbb{R}$ and $\pi $ $\in \mathcal{A}^{MF}$, where
\begin{equation*}
\mathcal{A}^{MF}=\left \{ \pi :\text{self-financing, }\mathcal{F}_{t}^{MF}%
\text{-progressively measurable and }E_{\mathbb{P}}\left[ \int_{0}^{T}\sigma_s^2\pi
_{s}^{2}\,ds\right] <\infty \right \} .
\end{equation*}

Similarly to the framework in \cite{LaZa:17}, there exist two independent
sources of randomness in the model: the first is due to the evolution of the
stock price process, described by the Brownian motions $W$ and $W^{Y}$. The
second is given by $\eta $, which models the type of the player, {\it i.e.}, the
triplet of initial wealth, risk tolerance, and interaction parameter in the
population continuum. The first source of noise is \textit{stochastic} and
common to each player in the continuum while the second is \textit{static}%
, being assigned at time zero and with the dynamic competition starting
right afterwards.

In analogy to the $N$-player setting, the representative player optimizes the expected terminal utility,
taking into account the performance of the average terminal wealth of the
population, denoted by $\overline{X}.$ As in \cite{LaZa:17}, we introduce the
following definition for the MFG considered herein.

\begin{defi}\label{def:MFG-incomplete}
For each $\pi \in \mathcal{A}^{MF}$, let $\overline{X}:=E_{\mathbb{P}}[X_{T}|\mathcal{F}_{T}]$ with $\left( X_{t}\right) _{t\in \left[ 0,T\right] }$ solving \eqref
	{X-t-MFG}, and consider the optimization problem
\begin{equation}
V(x) =\sup_{\pi \in \mathcal{A}^{MF}}E_{\mathbb{P}}\left[\left. -\exp
\left( -\frac{1}{\delta }\left( X_{T}-c\overline{X}\right) \right) \right \vert 
\mathcal{F}_{0}^{MF},X_{0}=x\right].  \label{V-MFG}
\end{equation}
A strategy $\mathit{\ }\pi ^{\ast }\in \mathcal{A}^{MF}$ is a mean-field equilibrium if $\pi ^{\ast }$ is the optimal strategy of the above problem when $\overline{X}^{\ast }:=E_{\mathbb{P}
}[X_{T}^{\ast }|\mathcal{F}_{T}]$ is used for $\overline X$, where $\left( X_{t}^{\ast }\right) _{t\in %
\left[ 0,T\right] }$ solves  \eqref{X-t-MFG} with $\pi ^{\ast }$ being used.
\end{defi}

Next, we state the main result.

\begin{prop}
If $E_{\mathbb{P}}[c] <1$, there exists a
unique wealth-independent MFG\ equilibrium $\left( \pi _{t}^{\ast }\right)
_{t\in \left[ 0,T\right] }$, given by the $\mathcal{F}_{0}^{MF}\vee \mathcal{%
G}_{t}$ process 
\begin{equation}
\pi _{t}^{\ast }=\left( \delta +\frac{E_{
		\mathbb{P}}[ \delta ] }{1-E_{\mathbb{P}}[c]}c\right) \left( \frac{\lambda _{t}}{\sigma
_{t}}+\frac{\rho }{1-\rho ^{2}}\frac{\xi _{t}}{\sigma _{t}}\right) ,
\label{pi-MFG}
\end{equation}%
with $\xi $ as in \eqref{def_xi}. The corresponding optimal wealth is given
by 
\begin{equation}\label{Xoptimal-MFG}
X_{t}^{\ast }=x+\left( \delta +\frac{E_{
		\mathbb{P}}[ \delta ] }{1-E_{\mathbb{P}}[c] }c \right)  \int_{0}^{t}\left(\lambda _{s} + \frac{\rho}{1-\rho^2}\xi_s\right)\left(
\lambda _{s}\,ds+\,dW_{s}\right),
\end{equation}%
and 
\begin{equation*}
V(x)=-\exp \left( -\frac{1}{\delta }\left( x-cm\right) \right) M_{0}^{\frac{1%
}{1-\rho ^{2}}} = -\exp \left( -\frac{1}{\delta }\left( x-cm\right) \right)\left(E_{\QQ^{MM}}\left[e^{-\half(1-\rho^2)\int_0^T \lambda_s^2 \,ds} \right]\right)^{\frac{1}{1-\rho^2}},
\end{equation*}
where $m = E_{\mathbb{P}}[x]$. 
If $E_{\mathbb{P}}[c] =1$, there is no such Nash equilibrium.
\end{prop}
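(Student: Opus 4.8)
The plan is to transcribe the $N$-player argument of the preceding proposition to the continuum, replacing the empirical averages $\varphi_N,\psi_N$ by the expectations $E_{\mathbb P}[\delta],E_{\mathbb P}[c]$ and exploiting that the type vector $\eta=(x,\delta,c)$ is independent of the Brownian filtration $\mathcal F_T$. Write $\hat\pi_t:=\frac{\lambda_t}{\sigma_t}+\frac{\rho}{1-\rho^2}\frac{\xi_t}{\sigma_t}$ for the common, type-independent, $\mathcal G_t$-adapted process, with $\xi$ as in \eqref{def_xi}; its associated wealth $\hat X_t:=\int_0^t\hat\pi_s(\mu_s\,ds+\sigma_s\,dW_s)=\int_0^t(\lambda_s+\frac{\rho}{1-\rho^2}\xi_s)(\lambda_s\,ds+dW_s)$ is admissible since $\lambda$ is bounded and $\xi\in\mathcal L^2(\mathbb P)$. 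First I would fix an arbitrary wealth-independent profile and compute the mean-field term it induces. If the representative player uses a strategy whose integrand is $\mathcal F_0^{MF}\vee\mathcal G_t$-measurable, then conditioning $X_T$ on $\mathcal F_T$ integrates out only the type, and a conditional Fubini argument yields $\overline X=E_{\mathbb P}[X_T\mid\mathcal F_T]=m+\int_0^T\overline\pi_s\sigma_s(\lambda_s\,ds+dW_s)$, where $m=E_{\mathbb P}[x]$ and $\overline\pi_s:=E_{\mathbb P}[\pi_s]$ is the type-averaged strategy. In particular $\overline X$ is \emph{replicable}, being the terminal wealth generated by the admissible strategy $\overline\pi$.

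Next I would solve the representative player's problem for this fixed $\overline X$. Since the liability $c\overline X=cm+\int_0^Tc\overline\pi_s\sigma_s(\lambda_s\,ds+dW_s)$ is replicable, I set $\tilde X_t:=X_t-\int_0^tc\overline\pi_s\sigma_s(\lambda_s\,ds+dW_s)$, an admissible wealth process with $\tilde X_0=x$, so that $X_T-c\overline X=\tilde X_T-cm$ and the objective factorizes as $e^{cm/\delta}E_{\mathbb P}[-e^{-\tilde X_T/\delta}\mid\mathcal F_0^{MF}]$. Applying Lemma~\ref{lem:nocompetition} conditionally on $\mathcal F_0^{MF}$---which fixes $\eta$, hence $\delta$, and is independent of $W,W^Y$---gives the optimal $\tilde\pi_t^*=\delta\hat\pi_t$ and the conditional value $-e^{-x/\delta}M_0^{1/(1-\rho^2)}$; undoing the shift produces the best response $\pi_t^*=\delta\hat\pi_t+c\overline\pi_t$.

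The equilibrium is then pinned down by self-consistency: imposing $\overline\pi=E_{\mathbb P}[\pi^*]$ and averaging $\pi_t^*=\delta\hat\pi_t+c\overline\pi_t$ over the type distribution, while using that $\hat\pi_t$ and $\overline\pi_t$ are type-independent, gives $(1-E_{\mathbb P}[c])\overline\pi_t=E_{\mathbb P}[\delta]\hat\pi_t$. Because $c\le1$ forces $E_{\mathbb P}[c]\le1$, exactly two cases arise. If $E_{\mathbb P}[c]<1$, this scalar relation has the unique solution $\overline\pi_t=\frac{E_{\mathbb P}[\delta]}{1-E_{\mathbb P}[c]}\hat\pi_t$, whence $\pi_t^*=(\delta+\frac{E_{\mathbb P}[\delta]}{1-E_{\mathbb P}[c]}c)\hat\pi_t$, which is \eqref{pi-MFG}; integrating the wealth dynamics gives \eqref{Xoptimal-MFG}, and the factor $e^{cm/\delta}$ times the conditional value above gives the stated $V(x)$, with uniqueness inherited from the uniqueness of the fixed point. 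If $E_{\mathbb P}[c]=1$ (so $c\equiv1$), the relation reduces to $0=E_{\mathbb P}[\delta]\hat\pi_t$, impossible since $E_{\mathbb P}[\delta]>0$ and $\hat\pi\not\equiv0$, so no wealth-independent equilibrium exists.

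The step I expect to be most delicate is the mean-field computation---justifying the interchange of the conditional expectation over $\eta$ with the stochastic integral to obtain the replicable representation of $\overline X$---together with the application of Lemma~\ref{lem:nocompetition} under the extra $\mathcal F_0^{MF}$-conditioning, where $\eta$ must be treated as frozen and independent of the market noise. Both rely on the finite-second-moment hypothesis on $\eta$ for the requisite admissibility and integrability; once these are in place, the remaining computations are routine.
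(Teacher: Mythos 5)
Your proposal is correct and follows essentially the same route as the paper: the reduction to the single-player problem via the auxiliary process $\tilde X_t = X_t - c\overline X_t$, the computation of $\overline X$ using the independence of the type vector from the market filtration, and the application of Lemma~\ref{lem:nocompetition} conditionally on $\mathcal{F}_0^{MF}$ are all exactly the paper's steps. The only difference is presentational: the paper posits the explicit candidate \eqref{pi-MFG} and verifies it, whereas you derive it from the averaging/fixed-point relation $(1-E_{\mathbb{P}}[c])\overline\pi_t = E_{\mathbb{P}}[\delta]\,\hat\pi_t$ (mirroring the paper's own $N$-player proof), which has the small advantage of making the uniqueness claim and the nonexistence when $E_{\mathbb{P}}[c]=1$ explicit rather than leaving them to ``the rest follows easily.''
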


\begin{proof}
We first observe that $\pi^{\ast }$ in \eqref{pi-MFG} is $%
\mathcal{F}_{t}^{MF}$-measurable since $\left( \frac{\lambda _{t}}{\sigma
_{t}}+\frac{\rho }{1-\rho ^{2}}\frac{\xi _{t}}{\sigma _{t}}\right) \in $ $%
\mathcal{G}_{t}$, and thus \\
$\left( \frac{\lambda _{t}}{\sigma _{t}}+\frac{%
	\rho }{1-\rho ^{2}}\frac{\xi _{t}}{\sigma _{t}}\right) \in \mathcal{F}_{t}$,
while the factor $\left( \delta +\frac{E_{
		\mathbb{P}}[ \delta ]}{1-E_{\mathbb{P}}[c] }c \right)  \in \mathcal{F}_{0}^{MF}$
(independent of $\mathcal{F}_{t}$). 
Furthermore, $\pi^{\ast }$ is also square-integrable under standing assumptions, and thus admissible. To show that it is also indeed an equilibrium policy, we shall first define $\overline X$ using  $\pi^\ast$, and then verify that the optimal strategy to the representative player's problem \eqref{V-MFG} coincides with $\pi_t^\ast$ when this specific $\overline X$ is used in \eqref{V-MFG}. To this end, we introduce the process $\overline X_t := E_\mathbb{P}[X_t^\ast \vert \mc{F}_t]$ with $(X_t^\ast)_{t \in [0,T]}$ as in  \eqref{Xoptimal-MFG}. Then,
\begin{align*}
\overline X_t & = E_\mathbb{P}\left[x+\left( \delta +\frac{E_{
		\mathbb{P}}[ \delta ]}{1-E_{\mathbb{P}}[c] }c\right)  \int_{0}^{t}\left(\lambda _{s} + \frac{\rho}{1-\rho^2}\xi_s\right)\left(
\lambda _{s}\,ds+\,dW_{s}\right)\Big\vert \mc{F}_t\right]\\
& = m +\left( E_\mathbb{P}[\delta] +\frac{E_\mathbb{P}[\delta]}{1-E_{\mathbb{P}}[ c] }E_{%
	\mathbb{P}}[ c ] \right) \int_{0}^{t}\left(\lambda _{s} + \frac{\rho}{1-\rho^2}\xi_s\right)\left(
\lambda _{s}\,ds+\,dW_{s}\right) \\
& = m +\left(\frac{E_{
		\mathbb{P}}[ \delta ]}{1-E_{\mathbb{P}}[ c] }\right) \int_{0}^{t}\left(\lambda _{s} + \frac{\rho}{1-\rho^2}\xi_s\right)\left(
\lambda _{s}\,ds + \,dW_{s}\right),
\end{align*}
where we have used that $\int_{0}^{t}\left(\lambda _{s} + \frac{\rho}{1-\rho^2}\xi_s\right)\left(
\lambda _{s}\, ds+ \,dW_{s}\right)$ is $\mc{G}_t$-measurable and thus $\mc{F}_t$-measurable, and that $\left( \delta +\frac{E_{
		\mathbb{P}}[ \delta ]}{1-E_{\mathbb{P}}[ c] }c\right)$ is independent of $\mc{F}_t$. 

Next, we introduce the auxiliary process  $\left( \tilde{x}_{t}\right)
_{t\in \left[ 0,T\right] }$,  $\, \tilde{x}_{t}:=\,X_{t}-c\overline{X}_{t},$ with $\left( X_{t}\right)
_{t\in \left[ 0,T\right] }$ as in \eqref{X-t-MFG}. Then, 
\begin{equation*}
d\tilde{x}_{t}=\widetilde{\pi }_{t}(\mu _{t}\,dt+\sigma _{t}\,dW_{t})\text{ \  \
and \  \ }\, \tilde{x}_{0} = \tilde x := x-cm,
\end{equation*}%
and $\widetilde{\pi }_{t}=\pi _{t}-c\left(\frac{E_{
		\mathbb{P}}[ \delta ]}{1-E_{\mathbb{P}}[ c] }\right) \left(\frac{\lambda _{t}}{\sigma_t} + \frac{\rho}{1-\rho^2}\frac{\xi_t}{\sigma_t}\right)$. In turn, we consider the
optimization problem 
\begin{equation*}
v(\tilde{x}) :=\sup_{\widetilde\pi \in \mathcal{A}^{MF}}E_{\mathbb{P}}\left[
\left. -\exp \left( -\frac{1}{\delta }\tilde{x}_{T}\right) \right \vert 
\mathcal{F}_{0}^{MF},\, \tilde{x}_{0}=\tilde{x}\right] .
\end{equation*}%
From Lemma~\ref{lem:nocompetition}, we deduce that the optimal strategy is given by 
\begin{equation*}
\widetilde{\pi }_{t}^{\ast }=\delta \left( \frac{\lambda _{t}}{\sigma _{t}}+%
\frac{\rho }{1-\rho ^{2}}\frac{\xi _{t}}{\sigma _{t}}\right),
\end{equation*}%
and, thus,
\begin{equation*}
\pi _{t}^{\ast }=\delta \left( \frac{\lambda _{t}}{\sigma _{t}}+\frac{\rho }{%
1-\rho ^{2}}\frac{\xi _{t}}{\sigma _{t}}\right) +c\left(\frac{E_{
	\mathbb{P}}[ \delta ]}{1-E_{\mathbb{P}}[c] }\right)\left( \frac{\lambda _{t}}{\sigma _{t}}+\frac{\rho }{%
1-\rho ^{2}}\frac{\xi _{t}}{\sigma _{t}}\right).
\end{equation*}%
The rest of the proof follows easily.
\end{proof}

If we view $\eta =(x,\delta ,c)$ in the $N$-player game in Section~\ref{sec_Nagent} as i.i.d. samples on the space $\mathcal{Z}$ with distribution $m$, then $\lim_{N\uparrow \infty }\psi_{N}=E_{\mathbb{P}}[c]$ and $\lim_{N\uparrow \infty }\varphi _{N}= E_{\mathbb{P}}[\delta] $ a.s.. We then obtain the convergence of the corresponding optimal processes, namely, for $t\in \left[ 0,T\right]$, 
\begin{equation*}
\lim_{N\uparrow \infty }\pi _{t}^{i,\ast }=\pi _{t}^{\ast },\text{ \  \ and \ } \lim_{N\uparrow \infty }X_{t}^{i,\ast }=X_{t}^{\ast }.
\end{equation*}

\subsubsection{The Markovian case}

In analogy to the $N$-player case, we have the following result.

\begin{prop}
Assume that the stock price process follows the single factor
model \eqref{def_St}--\eqref{def_Yt}. Then, if $E_{\mathbb{P}}[c] <1,$ there exists a unique wealth-independent Markovian mean-field
game equilibrium, given by the process $\left( \pi _{t}^{\ast }\right)
_{t\in \left[ 0,T\right] },$ 
\begin{equation*}
\pi _{t}^{\ast }=\pi ^{\ast }(\eta ,t,Y_{t})=\left( \delta +\frac{E_{
		\mathbb{P}}[ \delta ]}{1-E_{%
\mathbb{P}}[c] }c\right)
\left( \frac{\lambda (t,Y_{t})}{\sigma (t,Y_{t})}+\rho \frac{a(t,Y_{t})}{%
\sigma (t,Y_{t})}f_{y}(t,Y_{t})\right) ,
\end{equation*}%
with the $\mathcal{F}_{0}^{MF}$-measurable random function $\pi ^{\ast
}(\eta ,t,y):\mathcal{Z}\times \left[ 0,T\right] \times \mathbb{R},$ 
\begin{equation*}
\pi ^{\ast }(\eta ,t,y):=\left( \delta +\frac{E_{
		\mathbb{P}}[ \delta ]}{1-E_{\mathbb{P}}[c] }c \right) \left( \frac{\lambda
(t,y)}{\sigma (t,y)}+\rho \frac{a(t,y)}{\sigma (t,y)}f_{y}(t,y)\right) .
\end{equation*}%
If $E_\PP[c]=1$, there is no such mean-field game stochastic  equilibrium.
\end{prop}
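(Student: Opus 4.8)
The plan is to obtain this statement as the Markovian specialization of the general common-noise MFG result established in the preceding Proposition, combined with the explicit identification of the volatility process $\xi$ already carried out in the proof of Proposition~\ref{prop:markovian-incomplete}. The single-factor model \eqref{def_St}--\eqref{def_Yt} furnishes $\mathcal{G}_t$-adapted coefficients $\mu_t=\mu(t,Y_t)$ and $\sigma_t=\sigma(t,Y_t)$ which, under Assumption~\ref{assump:incomplete}, satisfy the standing boundedness and ellipticity hypotheses of Section~\ref{sec_MFG-incomplete}. Hence the general MFG equilibrium \eqref{pi-MFG} applies verbatim, and the fixed-point (self-consistency) argument computing $\overline X$ from $\pi^\ast$ need not be repeated. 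It only remains to rewrite that equilibrium in closed Markovian form and to check admissibility and measurability.

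First I would recall from the general MFG Proposition that, whenever $E_{\PP}[c]<1$, the unique wealth-independent equilibrium is
\[
\pi_t^\ast=\left(\delta+\frac{E_{\PP}[\delta]}{1-E_{\PP}[c]}\,c\right)\left(\frac{\lambda_t}{\sigma_t}+\frac{\rho}{1-\rho^2}\frac{\xi_t}{\sigma_t}\right),
\]
with $\xi$ as in \eqref{def_xi}. Next, I would compute $\xi_t$ explicitly in the single-factor model, exactly as in the proof of Proposition~\ref{prop:markovian-incomplete}: writing the $\QQ^{MM}$-martingale \eqref{martingale} as $M_t=\zeta(t,Y_t)\,e^{-\frac{1}{2}(1-\rho^2)\int_0^t\lambda^2(s,Y_s)\,ds}$, applying It\^o's formula, and invoking the Feynman--Kac equation \eqref{eq_zeta} to annihilate the drift, one obtains $\xi_t=a(t,Y_t)\,\zeta_y(t,Y_t)/\zeta(t,Y_t)$. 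Using the substitution $f=\frac{1}{1-\rho^2}\ln\zeta$, which solves \eqref{eq_f}, this becomes $\xi_t=(1-\rho^2)\,a(t,Y_t)\,f_y(t,Y_t)$. Substituting into the display above collapses the term $\frac{\rho}{1-\rho^2}\frac{\xi_t}{\sigma_t}$ into $\rho\,a(t,Y_t)f_y(t,Y_t)/\sigma(t,Y_t)$, yielding precisely the claimed process $\pi^\ast(\eta,t,Y_t)$.

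To finish, I would verify the three remaining points. Admissibility follows, as in Proposition~\ref{prop:markovian-incomplete}, from the boundedness of $f_y$ (since $\zeta$ is bounded away from zero and $\zeta_y$ is bounded under Assumption~\ref{assump:incomplete}), together with the second-moment assumption on $\eta$, which makes the type-dependent factor square-integrable and hence guarantees $E_{\PP}[\int_0^T\sigma_s^2(\pi_s^\ast)^2\,ds]<\infty$. For measurability, the factor $\delta+\frac{E_{\PP}[\delta]}{1-E_{\PP}[c]}c$ is $\mathcal{F}_0^{MF}$-measurable (a function of $\eta$ alone), while $\lambda(t,Y_t)/\sigma(t,Y_t)+\rho\,a(t,Y_t)f_y(t,Y_t)/\sigma(t,Y_t)$ is $\mathcal{G}_t$-measurable; hence $\pi_t^\ast$ is $\mathcal{F}_0^{MF}\vee\mathcal{G}_t$-adapted. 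Finally, the non-existence assertion when $E_{\PP}[c]=1$ is inherited directly from the general MFG Proposition, since the fixed-point equation for the population-averaged strategy degenerates there identically. The main (and essentially only) genuine step is the $\xi$-identification; everything else is a transcription of results already in hand, so I do not expect substantive difficulties.
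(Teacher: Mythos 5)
Your proposal is correct and is essentially the argument the paper intends: the paper states this proposition without proof precisely because it follows by specializing the general incomplete-market MFG equilibrium \eqref{pi-MFG} to the single-factor model and substituting the identification $\xi_t=(1-\rho^2)a(t,Y_t)f_y(t,Y_t)$ already derived in the proof of Proposition~\ref{prop:markovian-incomplete}, together with the same admissibility check via boundedness of $f_y$. Nothing in your route differs from that in substance.
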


\section{Complete It\^{o}-diffusion common market and CARA utilities with random
risk tolerance coefficients}\label{sec:complete}

In this section, we focus on the complete common market case, but we extend the model by allowing random individual risk tolerance coefficients. We start with a background result for the single-player problem, which is new and interesting in its own right. Building on it, we analyze both the $N$-player and the MFG. The analysis shows that the randomness of the individual risk tolerance gives rise to virtual ``personalized'' markets, in that the original common risk premium process now differs across players, depending on their risk tolerance. This brings substantial complexity as the tractability coming from the original common market is now lost.

\subsection{The It\^{o}-diffusion market and random risk tolerance coefficients}

We consider the complete analog of the It\^{o}-diffusion market studied in Section~\ref{sec:incomplete}. Specifically, we consider a market with a riskless bond
(taken to be the numeraire and offering zero interest rate) and a stock
whose price process $\left( S_{t}\right) _{t\in \left[ 0,T\right] }$ solves 
\begin{equation*}
dS_{t}=S_{t}\left( \mu _{t}\,dt+\sigma _{t}\,dW_{t}\right) ,
\end{equation*}%
with $S_{0}>0$, and $\left( W_t\right) _{t\in \left[ 0,T\right] }$ being a
Brownian motion in a probability space $(\Omega ,\mathcal{F},\mathbb{P})$.
The market coefficients $\left( \mu _{t}\right) _{t\in \left[ 0,T\right] }$
and $\left( \sigma _{t}\right) _{t\in \left[ 0,T\right] }$ are $\mathcal{F}%
_{t}$-adapted processes, where $(\mc{F}_t)_{t \in [0,T]}$ is the natural filtration generated by $W$, and with $0<c\leq \sigma _{t}\leq C$ and $\abs{\mu
_{t}}\leq C$, $t \in [0,T]$, for some (possibly deterministic) constants $c$ and $C$.

In this market, $N$ players, indexed by $i\in \mathcal{I}$, $\mathcal{I}%
=\left \{ 1, 2, \ldots, N\right \}$,  trade between the two accounts in $\left[ 0,T%
\right]$,  with individual wealths $\left( X_{t}^{i}\right) _{t\in \left[ 0,T%
\right] }$ solving 
\begin{equation}
\,dX_{t}^{i}=\pi _{t}^{i}\left( \mu _{t}\,dt+\sigma _{t}\,dW_{t}\right) ,
\label{X-i-complete}
\end{equation}%
and $X_{0}^{i}=x_{i}\in \mathbb{R}.$

Each of the players, say player $i$, has \textit{random risk tolerance}, $\delta
_{T}^{i}$, defined on $(\Omega ,\mathcal{F},\mathbb{P})$ with the
following properties:

\begin{assump}\label{assump:complete}
	For each $i\in \mathcal{I}$, the risk tolerance $\delta
	_{T}^{i}$ is an $\mathcal{F}_{T}$-measurable random variable with $\delta
	_{T}^{i}\geq \delta >0$ and $E_{\mathbb{P}}\left( \delta _{T}^{i}\right)
	^{2}<\infty$. 
\end{assump}

The objective of each player is to optimize 
\begin{multline}
V^{i}\left( x_{1}, \ldots, x_{i}, \ldots, x_{N}\right) \\
=\sup_{\mathcal{A}}E_{\mathbb{P}%
}\left[ \left. -\exp \left( -\frac{1}{\delta _{T}^{i}}\left( X_{T}^{i}-\frac{%
c_{i}}{N}\sum_{j=1}^{N}X_{T}^{j}\right) \right) \right \vert
X_{0}^{1}=x_{1}, \ldots, X_{0}^{i}=x_{i}, \ldots, X_{0}^{N}=x_{N}\right] ,
\label{V-i-risktolerance}
\end{multline}
with $c_{i}\in \left( -\infty ,1\right] $, $X^{j}$, $j\in \mathcal{I}$, solving
\eqref{X-i-complete}, and $\mathcal{A}$ defined similarly to \eqref%
{admissible-set-part1}.

As in Section~\ref{sec_Nagent}, we are interested in a Nash equilibrium
solution, which is defined as in Definition~\ref{def:Nash}. Before we solve
the underlying stochastic $N$-player game, we focus on the single-player case. This is a problem interesting in its own right and, to
our knowledge, has not been studied before in such markets. A similar
problem was considered in a single-period binomial model in \cite{musiela2002note} and in a special diffusion case in \cite{ringer2011three} in the context of
indifference pricing of bonds. For generality, we present below the
time-dependent case. 

\subsection{The single-player problem}\label{sec:single-complete}

We consider the optimization problem 
\begin{equation}
v_{t}( x) =\sup_{\pi \in \mathcal{A}}E_{\mathbb{P}}\left[\left. -e^{-%
\frac{1}{\delta _{T}}x_{T}}\right \vert \mathcal{F}_{t},  x_{t}=x\right],
\label{single-risk tolerance}
\end{equation}%
with $\delta _{T}\in \mathcal{F}_{T}$ satisfying Assumption~\ref{assump:complete} and 
$\left( x_{s}\right) _{s\in \left[ t,T\right] }$ solving \eqref{X-i-complete} with $x_{t}=x\in \mathbb{R}$.

We define $\left( Z_{t}\right) _{t\in \left[ 0,T\right] }$ by 
\begin{equation*}
Z_{t}=\exp \left( -\frac{1}{2}\int_{0}^{t}\lambda
_{s}^{2}\,ds-\int_{0}^{t}\lambda _{s}\,dW_{s}\right) ,
\end{equation*}%
and recall the associated (unique) risk neutral measure $\mathbb{Q}$,
defined on $\mathcal{F}_{T}$ and given by 
\begin{equation}
\frac{d\mathbb{Q}}{d\mathbb{P}}=Z_{T}.  \label{Q-measure}
\end{equation}%
We introduce the process $\left( \delta _{t}\right) _{t\in \left[ 0,T\right]
},$ 
\begin{equation}
\delta _{t}:=E_{\mathbb{Q}}[\delta _{T}|\mathcal{F}_{t}],
\label{delta-process}
\end{equation}%
which may be thought as the arbitrage-free price of the risk tolerance
``claim'' $\delta _{T}$. We also introduce the measure $\hat{\mathbb{Q}}$, defined on $\mathcal{F}%
_{T},$ with 
\begin{equation*}
\frac{d\hat{\mathbb{Q}}}{d\mathbb{P}} =\frac{\delta _{T}}{E_{\mathbb{Q}}[\delta _{T}]}Z_{T}.
\end{equation*}%
Direct calculations yield that under measure $\hat{\mathbb{Q}}$, the process 
$\left( \frac{S_{t}}{\delta _{t}}\right) _{t\in \left[ 0,T\right] }$ is an $%
\mathcal{F}_{t}$-martingale.

By the model assumptions and the martingale representation theorem, there
exists an $\mathcal{F}_{t}$-adapted process $\left( \xi _{t}\right) _{t\in %
\left[ 0,T\right] }$ with $\xi \in \mathcal{L}^{2}\left( \mathbb{P}\right) $
such that 
\begin{equation}
d\delta _{t}=\xi _{t}\delta _{t}\,dW_{t}^{\mathbb{Q}},  \label{xi-delta}
\end{equation}%
with $W_{t}^{\mathbb{Q}}=W_{t}+\int_{0}^{t}\lambda _{s}\,ds$. Next, we introduce
the process 
\begin{equation}
H_{t}:=E_{\mathbb{\widetilde{Q}}}\left[ \left. \frac{1}{2}\int_{t}^{T}\left(
\lambda _{s}-\xi _{s}\right) ^{2}\,ds\right \vert \mathcal{F}_{t}\right] ,
\label{def:H2}
\end{equation}%
where $\widetilde{\mathbb{Q}}$ is defined on $\mathcal{F}_{T}$ by 
\begin{equation}\label{def:Qtilde}
\frac{d\widetilde{\mathbb{Q}}}{d\mathbb{P}}=\exp \left( -\frac{1}{2}%
\int_{0}^{T}(\lambda _{s}-\xi _{s})^{2}\,ds-\int_{0}^{T}\left( \lambda
_{s}-\xi _{s}\right) \,dW_{s}\right) .
\end{equation}
Under $\widetilde{\mathbb{Q}}$, the process $\left( W_{t}^{\widetilde{%
\mathbb{Q}}}\right) _{t\in \left[ 0,T\right] }$ with 
\begin{equation}
W_{t}^{\widetilde{\mathbb{Q}}}:=W_{t}+\int_{0}^{t}\left( \lambda _{s}-\xi
_{s}\right) \,ds  \label{W-Q-telda}
\end{equation}%
is a standard Brownian motion, and  $\left(\frac{1}{\delta _{t}}S_{t}\right)_{t \in [0,T]}$
is a martingale with dynamics 
\begin{equation*}
d\left( \frac{S_{t}}{\delta _{t}}\right) =(\sigma
_{t}-\xi _{t})\frac{S_{t}}{\delta _{t}}\,dW_{t}^{\widetilde{\mathbb{Q}}}.
\end{equation*}%
Direct calculations yield 
\begin{equation*}
\frac{d\widetilde{\mathbb{Q}}}{d\mathbb{Q}}=\delta _{T}.
\end{equation*}%
Alternatively, $H_{t}$ may be also represented as 
\begin{equation}
H_{t}=\frac{E_{\mathbb{Q}}[\delta _{T}\int_{t}^{T}\frac{1}{2}(\lambda
_{s}-\xi _{s})^{2}\,ds|\mathcal{F}_{t}]}{E_{\mathbb{Q}}[\delta _{T}|\mathcal{F}%
_{t}]}=E_{\mathbb{Q}}\left[ \frac{\delta _{T}}{\delta _{t}}\int_{t}^{T}\left.\frac{%
1}{2}(\lambda _{s}-\xi _{s})^{2}\,ds \right\vert\mathcal{F}_{t}\right] ,  \label{def:H3}
\end{equation}%
which is obtained by using that 
\begin{equation*}
\frac{d\widetilde{\mathbb{Q}}}{d\mathbb{Q}}=\exp \left( -\frac{1}{2}%
\int_{0}^{T}\xi _{s}^{2}\,ds+\int_{0}^{T}\xi _{s}\,dW_{s}^{\mathbb{Q}}\right) .
\end{equation*}%
Finally, we introduce the processes $\left( M_{t}\right) _{t\in \left[ 0,T%
\right] }$ and $\left( \eta _{t}\right) _{t\in \left[ 0,T\right] }$ with 
\begin{equation}
M_{t}=\mathbb{E}_{\mathbb{\widetilde{Q}}}\left[ \frac{1}{2}\int_{0}^{T}\left.
\left( \lambda _{s}-\xi _{s}\right) ^{2}\,ds\right \vert \mathcal{F}_{t}\right] 
\text{ \  \  \  \ and \  \  \ }dM_{t}=\eta _{t}\,dW_{t}^{\widetilde{\mathbb{Q}}}.
\label{M-eta}
\end{equation}%
We are now ready to present the main result.

\begin{prop}\label{prop:single}
The following assertions hold:

\begin{enumerate}
\item The value function of \eqref{single-risk tolerance} is given by 
\begin{equation*}
v_{t}(x) =-\exp \left( -\frac{x}{\delta _{t}}-H_{t}\right) ,
\end{equation*}%
with $\delta $ and $H$ as in \eqref{delta-process} and \eqref{def:H2}.

\item The optimal strategy $\left( \pi _{s}^{\ast }\right) _{s\in \left[ t,T%
	\right] }$ is given by 
\begin{equation}
\pi _{s}^{\ast }=\delta _{s}\frac{\lambda _{s}-\eta _{s}-\xi _{s}}{\sigma
	_{s}}+\frac{\xi _{s}}{\sigma _{s}}x_{s}^{\ast },  \label{def:piast}
\end{equation}%
with $\xi ,\eta $ as in \eqref{xi-delta} and \eqref{M-eta}, and $x^{\ast }$
solving \eqref{X-i-complete} with $\pi ^{\ast }$ being used.

\item The optimal wealth $\left( x_{s}^{\ast }\right) _{s\in \left[t,T\right]
}$ solves 
\begin{equation*}
dx_{s}^{\ast }=\lambda _{s}\left( \delta _{s}(\lambda _{s}-\eta _{s}-\xi
_{s})+\xi _{s}x_{s}^{\ast }\right) \,ds+\left( \delta _{s}(\lambda _{s}-\eta
_{s}-\xi _{s})+\xi _{s}x_{s}^{\ast }\right)\, dW_{s}, \quad x_t^\ast = x, 
\end{equation*}%
and is given by 
\begin{equation}
x_{s}^{\ast }=x\Phi _{t,s} +\int_{t}^{s}\delta _{u}(\lambda _{u}-\xi
_{u})(\lambda _{u}-\eta _{u}-\xi _{u})\Phi _{u,s}\,du+\int_{t}^{s}\delta
_{u}(\lambda _{u}-\eta _{u}-\xi _{u})\Phi _{u,s}\,dW_{u},  \label{eq:Xt}
\end{equation} 
where, for $0 \leq u \leq s \leq T$,
\begin{equation*}
\Phi _{u,s}:=\exp \left( \int_{u}^{s}\left( \lambda _{v}-\frac{1}{2}\xi
_{v}\right) \xi _{v}\,dv+\int_{u}^{s}\xi _{v}\,dW_{v}\right) .
\end{equation*}

\end{enumerate}
\end{prop}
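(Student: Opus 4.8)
The plan is to prove all three assertions simultaneously by the \emph{martingale optimality principle}. I would fix an arbitrary admissible $\pi\in\mathcal{A}$, let $(x_s)_{s\in[t,T]}$ be the corresponding wealth from \eqref{X-i-complete}, and introduce the candidate value process
\[
U_s := -\exp\left(-\frac{x_s}{\delta_s}-H_s\right),\qquad s\in[t,T],
\]
with $\delta$ and $H$ as in \eqref{delta-process} and \eqref{def:H2}. The goal is to show that $(U_s)$ is a $\mathbb{P}$-supermartingale for every admissible $\pi$, a true $\mathbb{P}$-martingale when $\pi=\pi^{\ast}$ of \eqref{def:piast}, and that it carries the correct terminal value. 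The latter is immediate: since $H_T=0$ and $\delta_T$ is the actual risk tolerance, $U_T=-e^{-x_T/\delta_T}$, which is exactly the utility being optimized. Granting the (super)martingale claims, taking conditional expectations and then the supremum yields $v_t(x)=U_t=-\exp(-x/\delta_t-H_t)$, with the supremum attained at $\pi^{\ast}$.

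The computational core is an application of It\^o's formula to $U_s=-e^{G_s}$ with $G_s:=-x_s/\delta_s-H_s$. I would first assemble the $\mathbb{P}$-dynamics of each ingredient. From \eqref{xi-delta} and $W^{\mathbb{Q}}_s=W_s+\int_0^s\lambda_u\,du$ one has $d\delta_s=\xi_s\delta_s(\lambda_s\,ds+dW_s)$, hence $d(1/\delta_s)=(1/\delta_s)(\xi_s^2\,ds-\xi_s\lambda_s\,ds-\xi_s\,dW_s)$. For $H$, the key observation is that, comparing \eqref{def:H2} with \eqref{M-eta}, $H_s=M_s-\tfrac12\int_0^s(\lambda_u-\xi_u)^2\,du$; together with $dM_s=\eta_s\,dW_s^{\widetilde{\mathbb{Q}}}$ and $W_s^{\widetilde{\mathbb{Q}}}=W_s+\int_0^s(\lambda_u-\xi_u)\,du$ this gives $dH_s=\eta_s\,dW_s+\big(\eta_s(\lambda_s-\xi_s)-\tfrac12(\lambda_s-\xi_s)^2\big)\,ds$. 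Writing $a_s:=\pi_s\sigma_s/\delta_s$ for the normalized control and $\Sigma_s:=x_s\xi_s/\delta_s-a_s-\eta_s$ for the diffusion coefficient of $G_s$, a direct (but bookkeeping-heavy) computation collapses the drift of $G$ to $\Sigma_s(\lambda_s-\xi_s)+\tfrac12(\lambda_s-\xi_s)^2$, so that $dU_s=U_s\,dG_s+\tfrac12 U_s\,d\langle G\rangle_s$ has drift
\[
U_s\left(\Sigma_s(\lambda_s-\xi_s)+\tfrac12(\lambda_s-\xi_s)^2+\tfrac12\Sigma_s^2\right)=\tfrac12 U_s\big(\Sigma_s+\lambda_s-\xi_s\big)^2.
\]
Since $U_s<0$, this drift is nonpositive, giving the supermartingale property, and it vanishes exactly when $\Sigma_s+\lambda_s-\xi_s=0$. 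Solving this for $\pi_s$ returns precisely $\pi_s^{\ast}=\delta_s(\lambda_s-\eta_s-\xi_s)/\sigma_s+(\xi_s/\sigma_s)x_s^{\ast}$ of \eqref{def:piast}, establishing optimality. For the third assertion, I would substitute $\pi^{\ast}$ into \eqref{X-i-complete}: using $\mu_s=\lambda_s\sigma_s$ and $\pi_s^{\ast}\sigma_s=\delta_s(\lambda_s-\eta_s-\xi_s)+\xi_s x_s^{\ast}$, the wealth solves the stated linear SDE with multiplicative noise, whose homogeneous part $dx=\xi_s x(\lambda_s\,ds+dW_s)$ has fundamental solution $\Phi_{u,s}$, and variation of constants then yields the closed form \eqref{eq:Xt}.

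I expect the main obstacle to be rigor at the two ends of the argument rather than the It\^o computation. First, the supermartingale/martingale dichotomy must be upgraded from a \emph{local} statement: one must verify that $U^{\ast}$ is a genuine martingale (so that equality, not merely $\leq$, holds at $\pi^{\ast}$) and that the localizing stochastic integrals are uniformly integrable. This calls for integrability estimates leveraging the boundedness of $\lambda$ from \eqref{lamda-bounds}, the square-integrability of $\xi$ and $\eta$, and the second-moment bound $E_{\mathbb{P}}(\delta_T^{i})^2<\infty$ of Assumption~\ref{assump:complete}. Second, and relatedly, admissibility of $\pi^{\ast}$ is more delicate than in the incomplete-market Lemma~\ref{lem:nocompetition}, because here the optimal feedback depends on the state $x_s^{\ast}$; checking $E_{\mathbb{P}}\big[\int_t^T\sigma_s^2(\pi_s^{\ast})^2\,ds\big]<\infty$ requires moment bounds on the linear-SDE solution \eqref{eq:Xt}, i.e.\ $L^2$ control of $\Phi_{u,s}$ and of the stochastic convolution against $\delta_s(\lambda_s-\eta_s-\xi_s)$, which is where the structural assumptions on $\delta_T$ and the boundedness of the coefficients are genuinely needed.
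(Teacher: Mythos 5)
Your proposal is correct and follows essentially the same route the paper takes: the paper omits a written proof of Proposition~\ref{prop:single}, but its template is the supermartingale/martingale verification used for Lemma~\ref{lem:nocompetition}, i.e.\ apply It\^{o}'s formula to the candidate process $-\exp(-x_s/\delta_s-H_s)$, collapse the drift to $\tfrac12 U_s(\Sigma_s+\lambda_s-\xi_s)^2\le 0$, and read off the optimal feedback, which is exactly what you do (and your It\^{o} bookkeeping, the identification $H_s=M_s-\tfrac12\int_0^s(\lambda_u-\xi_u)^2\,du$, and the variation-of-constants solution for \eqref{eq:Xt} all check out). Your closing remarks on upgrading the local (super)martingale property and on the admissibility of the state-dependent $\pi^{\ast}$ correctly identify the only points the paper's style leaves implicit.
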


Using \eqref{eq:Xt}, \eqref{def:piast} gives the explicit representation of
the optimal policy, 
\begin{equation*}
\pi _{s}^{\ast }=\delta _{s}\frac{\lambda _{s}-\eta _{s}-\xi _{s}}{\sigma
	_{s}}+\frac{\xi _{s}}{\sigma _{s}}\left( x\Phi _{t,s}+\int_{t}^{s}\delta
_{u}(\lambda _{u}-\xi _{u})(\lambda _{u}-\eta _{u}-\xi _{u})\Phi
_{u,s}\,du+\int_{t}^{s}\delta _{u}(\lambda _{u}-\eta _{u}-\xi _{u})\Phi
_{u,s}\,dW_{u}\right) .
\end{equation*}

\subsubsection{The Markovian case}

We assume that the stock price process $\left( S_{t}\right) _{t\in \left[ 0,T%
\right] }$ solves 
\begin{equation*}
dS_{t}=\mu (t,S_{t})S_{t}\,dt+\sigma (t,S_{t})S_{t} \,dW_{t}\text{,}
\end{equation*}%
with the initial price $S_{0}>0,$ and the functions $\mu (t,S_{t})$ and $\sigma (t,S_{t})$
satisfying appropriate conditions, similar to the ones in Subsection~\ref{sec_markov} and
Assumption~\ref{assump:incomplete}. The risk tolerance is assumed to have the functional
representation 
\begin{equation*}
\delta _{T}=\delta (S_{T}),
\end{equation*}%
for some function $\delta :\mathbb{R}^{+}\rightarrow \mathbb{R}^{+}$ bounded
from below and such that $E_{\mathbb{P}}\left[ \delta ^{2}(S_{T})\right]
<\infty$, ({\it cf.} Assumption~\ref{assump:complete}).

The value function in \eqref{single-risk tolerance} takes the form
\begin{equation*}
V(t,x,S)=\sup_{\pi \in \mathcal{A}}E_{\mathbb{P}}\left[ -e^{-\frac{1}{\delta
\left( S_{T}\right) }x_{T}}\Big\vert x_{t}=x,S_{t}=S\right] ,
\end{equation*}%
and, in turn, Proposition~\ref{prop:single} yields
\begin{equation*}
V(t,x,S)=-\exp \left( \frac{x}{\delta (t,S)}-H(t,S)\right) ,
\end{equation*}%
with $\delta (t,S)$ and $H(t,S)$ solving%
\begin{equation*}
\delta _{t}+\frac{1}{2}\sigma ^{2}(t,S)S^{2}\delta _{SS}=0,\quad \delta
(T,S)=\delta (S),
\end{equation*}%
and 
\begin{equation*}
H_{t}+\frac{1}{2}\sigma ^{2}(t,S)S^{2}H_{SS}+\frac{1}{\delta(t,S) }\sigma ^{2}(t,S)S^{2}\delta _{S}(t, S) H_{S}+\frac{1}{2}\left( \lambda (t,S)-\frac{1}{\delta(t,S)}\sigma (t,S)S\delta _{S}(t, S)\right)^{2}=0,\quad H(T,S)=0.
\end{equation*}%
Clearly, 
\begin{equation*}
\delta (t,S)=E_{\mathbb{Q}}\left[ \left. \delta (S_{T})\right \vert S_{t}=S%
\right],
\end{equation*}%
and 
\begin{equation*}
H(t,S)=E_{\mathbb{\widetilde{Q}}}\left[ \int_{t}^{T}\frac{1}{2}\left(
\lambda (u,S_{u})-\sigma (u,S_{u})S_{u}\frac{\delta _{S}(u,S_{u})}{\delta
(u,S_{u})}\right) ^{2}du\bigg \vert S_{t}=S\right] ,
\end{equation*}%
and, furthermore, 
\begin{equation*}
\xi _{t}=\frac{\delta _{S}(t,S_{t})}{\delta (t,S_{t})}S_{t}\sigma (t,S_{t})%
\text{ \  \  \  \ and }\quad \eta _{t}=H_{S}(t,S_{t})S_{t}\sigma (t,S_{t}).
\end{equation*}%
Using the above relations and \eqref{def:piast}, we derive the optimal investment
process, 
\begin{equation*}
\pi _{s}^{\ast }=\delta (s,S_{s})\left( \frac{\lambda (s,S_{s})}{\sigma
(s,S_{s})}-S_{s}H_{S}(s,S_{s})\right) +\delta _{S}(s,S_{s})S_{s}\left( -1+%
\frac{1}{\delta \left( s,S_{s}\right) }x_{s}^{\ast }\right).
\end{equation*}%

For completeness, we note that if $\delta _{T} \equiv \delta >0$,  the above
expression simplify to (see \cite{SiZa:05}) 
\begin{equation*}
V(t,x,S)=-e^{-\frac{1}{\delta }x-H(t,S)},
\end{equation*}%
with $H(t,S)$ solving 
\begin{equation*}
H_{t}+\frac{1}{2}\sigma ^{2}(t,S)S^{2}H_{SS}+\frac{1}{2}\lambda
^{2}(t,S)=0,\quad H(T,S)=0.
\end{equation*}%
The optimal strategy reduces to 
\begin{equation*}
\pi _{s}^{\ast }=\delta \left( \frac{\lambda (s,S_{s})}{\sigma (s,S_{s})}%
-S_{s}H_{S}(s,S_{s})\right) .
\end{equation*}

\subsection{$N$-player game}

We now study the $N$-player game. The concepts and various quantities are in
direct analogy to those in Section~\ref{sec_Nagent} and, thus, we omit various
intermediate steps and only focus on the new elements coming from the
randomness of the risk tolerance coefficients.

\begin{prop}
For $i\in \mathcal{I}$, let 
\begin{equation*}
\delta _{t}^{i}=E_{\mathbb{Q}}[ \left. \delta _{T}^{i}\right \vert 
\mathcal{F}_{t}] ,
\end{equation*}%
with $\mathbb{Q}$ as in \eqref{Q-measure} and $\left( \xi _{t}^{i}\right)
_{t\in \left[ 0,T\right] }$ be such that 
\begin{equation*}
d\delta _{t}^{i}=\xi _{t}^{i}\delta _{t}^{i}\,dW_{t}^{\mathbb{Q}}.
\end{equation*}
Define the measure $\widetilde Q^i$ on $\mc{F}_T$ as 
\begin{equation}
\frac{d\widetilde{\mathbb{Q}}^i}{d\mathbb{P}}=\exp \left( -\frac{1}{2}%
\int_{0}^{T}(\lambda _{s}-\xi _{s}^i)^{2}\,ds-\int_{0}^{T}\left( \lambda
_{s}-\xi _{s}^i\right) \,dW_{s}\right),
\end{equation}
and the processes $(M_t^i)_{t \in [0,T]}$ and $(\eta_t)_{t \in [0,T]}$ with
\begin{equation}
M_{t}^i=\mathbb{E}_{\mathbb{\widetilde{Q}}^i}\left[\frac{1}{2}\int_{0}^{T}\left.
\left( \lambda _{s}-\xi _{s}^i\right) ^{2}\,ds\right \vert \mathcal{F}_{t}\right] 
\text{ \  \  \  \ and \  \  \ }dM_{t}^i=\eta _{t}^i\,dW_{t}^{\widetilde{\mathbb{Q}}^i}.
\end{equation}
Let also, 
\begin{equation*}
\psi _{N}=\frac{1}{N}\sum_{i=1}^{N} c_i,
\end{equation*}%
and assume that $\psi _{N}<1$. Then 

\begin{enumerate}

\item  The player $i$'s game value \eqref{V-i-risktolerance} is given by 
\begin{equation*} 
V^{i}(x_{1}, \ldots, x_{i}, \ldots, x_{N} ) =-\exp \left( -\frac{1}{E_{%
\mathbb{Q}}[\delta _{T}^{i}]\text{ }}\bigl(x_{i}-\frac{c_{i}}{N}\Sigma_{j=1
}^N x_{j}\bigr)-E_{\widetilde{\mathbb{Q}}^{i}}\left[ \frac{1}{2}\int_{0}^{T}\left(
\lambda _{s}-\xi _{s}^{i}\right) ^{2}\,ds\right] \right).
\end{equation*}%

\item The equilibrium strategies $(\pi _{t}^{1,\ast },\ldots ,\pi _{t}^{N,\ast
})_{t\in \left[ 0,T\right] }$ are given by 
\begin{equation}
\pi _{t}^{i,\ast } = c_i \bar \pi_t^\ast + \frac{1%
}{\sigma _{t}}\left( \delta _{t}^{i}(\lambda _{t}-\xi _{t}^{i}-\eta
_{t}^{i})+\bigl(X_{t}^{i,\ast }-\frac{c_{i}}{N}\sum_{j = 1}^N X_{t}^{j,\ast
}\bigr)\xi _{t}^{i}\right) ,  \label{eq:pirelation}
\end{equation}%
where $\bar\pi_t^\ast := \frac{1}{N} \Sigma_{j=1}^N \pi_t^{j, \ast}$ is defined as
\begin{equation}\label{eq:pibar-complete}
\bar\pi_t^\ast  = \frac{1}{1-\psi _{N}}\frac{1}{\sigma _{t}}\left(
\lambda _{t}\varphi _{N}^{1}(t)-\varphi _{N}^{2}(t)+\varphi
_{N}^{3}(t)-\varphi _{N}^{4}(t)\bar{X}_{t}^{\ast }\right),
\end{equation}
with
\begin{align*}
\varphi _{N}^{1}(t)&=\frac{1}{N}\Sigma _{j=1}^{N}\delta _{t}^{j}, \quad \varphi _{N}^{2}(t)=\frac{1}{N}\Sigma
_{j=1}^{N}\delta _{t}^{j}(\xi _{t}^{j}+\eta
_{t}^{j}), \\
\varphi _{N}^{3}(t)&=\frac{1}{N}\Sigma _{j=1}^{N}X_{t}^{j,\ast }\xi _{t}^{j}, \quad \varphi _{N}^{4}(t)=\Sigma _{j=1}^{N}c_j\xi _{t}^{j}.
\end{align*}%

\item The associated optimal wealth processes $\left( X_{t}^{i,\ast }\right)
_{t\in \left[ 0,T\right] }$ are given by%
\begin{equation}
X_{t}^{i,\ast }=c_i\bar{X}%
_{t}^{\ast } + \left( \tilde{x}_{i}\Phi
_{0,t}^{i}+\int_{0}^{t}(\lambda _{s}-\xi _{s}^{i})\delta _{s}^{i}(\lambda
_{s}-\eta _{s}^{i}-\xi _{s}^{i})\Phi _{s,t}^{i}\,ds+\int_{0}^{t}\delta
_{s}^{i}(\lambda _{s}-\eta _{s}^{i}-\xi _{s}^{i})\Phi
_{s,t}^{i}\,dW_{s}\right),\label{eq-Xt-complete}
\end{equation}%
with 
\begin{equation*}
\bar{X}_{t}^{\ast }:=\frac{1}{1-\psi _{N}}\left( \frac{1}{N}\Sigma
_{i=1}^{N}\left( \tilde{x}_{i}\Phi
_{0,t}^{i}+\int_{0}^{t}\delta _{s}^{i}(\lambda _{s}-\xi _{s}^{i})(\lambda
_{s}-\eta _{s}^{i}-\xi _{s}^{i})\Phi _{s,t}^{i}\,ds+\int_{0}^{t}\delta
_{s}^{i}(\lambda _{s}-\eta _{s}^{i}-\xi _{s}^{i})\Phi
_{s,t}^{i}\,dW_{s}\right) \right),
\end{equation*}%
where $ \tilde{x}_{i}=x_{i}-\frac{c_{i}}{N}\Sigma _{j=1}^{N}x_{j}$, and 
\begin{equation}
\Phi _{s,t}^{i}:=\exp \left( \int_{s}^{t}\left( \lambda _{u}-\frac{1}{2}\xi
_{u}^{i}\right) \xi _{u}^{i}\,du+\int_{s}^{t}\xi _{u}^{i}\,dW_{u}\right) \text{.}
\label{Phis}
\end{equation}

\end{enumerate}
\end{prop}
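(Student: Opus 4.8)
The plan is to reduce each player's optimization to the single-player problem of Proposition~\ref{prop:single} through a change of variables, and then to close the resulting coupled system by averaging over the players and inverting $1-\psi_N$. This parallels the incomplete CARA argument of Section~\ref{sec_Nagent}; the essential new feature is that the single-player optimal policy \eqref{def:piast} is now \emph{wealth-dependent}, through the term $\frac{\xi_s}{\sigma_s}x_s^\ast$, so the equilibrium strategies and the equilibrium wealth processes must be solved for simultaneously.

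First I would fix the strategies $(\pi^j)_{j\neq i}$ of all other players and introduce the auxiliary state $\tilde{x}_t^i:=X_t^i-\frac{c_i}{N}\sum_{j=1}^N X_t^j$, with initial value $\tilde{x}_i=x_i-\frac{c_i}{N}\sum_{j=1}^N x_j$. Since $dX_t^j=\pi_t^j(\mu_t\,dt+\sigma_t\,dW_t)$ for every $j$, the auxiliary state satisfies $d\tilde{x}_t^i=\tilde{\pi}_t^i(\mu_t\,dt+\sigma_t\,dW_t)$ with effective control $\tilde{\pi}_t^i=\pi_t^i-\frac{c_i}{N}\sum_{j=1}^N\pi_t^j=\pi_t^i(1-\frac{c_i}{N})-\frac{c_i}{N}\sum_{j\neq i}\pi_t^j$. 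Because $\mathcal{A}$ is a linear space, $1-\frac{c_i}{N}>0$, and the fixed process $-\frac{c_i}{N}\sum_{j\neq i}\pi^j$ itself lies in $\mathcal{A}$, the map $\pi^i\mapsto\tilde{\pi}^i$ is an affine bijection of $\mathcal{A}$ onto itself; hence player $i$'s problem \eqref{V-i-risktolerance} is \emph{exactly} the single-player problem of Proposition~\ref{prop:single} for the state $\tilde{x}^i$ with random risk tolerance $\delta_T^i$. Reading off its value at $t=0$, with $\delta_0^i=E_\QQ[\delta_T^i]$ and $H_0^i=E_{\widetilde{\QQ}^i}[\half\int_0^T(\lambda_s-\xi_s^i)^2\,ds]$, gives part~(1), while its optimal policy yields $\tilde{\pi}_t^{i,\ast}=\frac{1}{\sigma_t}(\delta_t^i(\lambda_t-\eta_t^i-\xi_t^i)+\xi_t^i\,\tilde{x}_t^{i,\ast})$.

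Next I would translate these back. From $\tilde{\pi}_t^{i,\ast}=\pi_t^{i,\ast}-c_i\bar{\pi}_t^\ast$ and $\tilde{x}_t^{i,\ast}=X_t^{i,\ast}-c_i\bar{X}_t^\ast$, substituting into the preceding display produces \eqref{eq:pirelation}. Averaging \eqref{eq:pirelation} over $i$ and collecting the resulting sums into $\varphi_N^1,\ldots,\varphi_N^4$ yields a relation of the form $\bar{\pi}_t^\ast=\psi_N\bar{\pi}_t^\ast+(\cdots)$, which under $\psi_N<1$ is uniquely solved by \eqref{eq:pibar-complete}. For the wealth processes, the explicit single-player solution \eqref{eq:Xt} of the linear SDE --- with $\delta^i,\xi^i,\eta^i,\Phi^i$ in place of the generic quantities and started from $\tilde{x}_i$ --- gives $\tilde{x}_t^{i,\ast}$; then $X_t^{i,\ast}=c_i\bar{X}_t^\ast+\tilde{x}_t^{i,\ast}$, and averaging over $i$ together with $\psi_N<1$ determines $\bar{X}_t^\ast$ in closed form, giving \eqref{eq-Xt-complete}.

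The main obstacle is the self-consistency of this construction. Because the individual optimal policy is wealth-dependent, the equilibrium couples every $X^{i,\ast}$ to the common average $\bar{X}^\ast$, so --- unlike the CARA case --- one cannot determine the average strategy in isolation; the ``personalized'' markets generated by the distinct $(\xi^i,\eta^i,\widetilde{\QQ}^i)$ moreover destroy the common-market tractability. The resolution rests on the fact that each auxiliary wealth $\tilde{x}^{i,\ast}$ solves a \emph{linear} SDE with the explicit solution \eqref{eq:Xt} built from the exponential factor $\Phi^i$; averaging these explicit expressions and inverting $1-\psi_N$ (invertible precisely because $\psi_N<1$) closes the loop and makes $\bar{X}^\ast$, and hence each $X^{i,\ast}$ and $\pi^{i,\ast}$, well-defined. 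It then remains to check that $\pi^{i,\ast}$ is admissible --- its square-integrability following from the boundedness of $\lambda$ and $\sigma$, Assumption~\ref{assump:complete} on $\delta_T^i$, and the $\mathcal{L}^2$ property of $\xi^i,\eta^i$ --- and to confirm that inserting the equilibrium averages into each player's problem reproduces $\tilde{\pi}^{i,\ast}$, so that no player can improve unilaterally.
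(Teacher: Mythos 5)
Your proposal is correct and follows essentially the same route as the paper: reduce player $i$'s problem to the single-player problem of Proposition~\ref{prop:single} via the auxiliary state $\tilde{x}^i_t=X^i_t-\frac{c_i}{N}\sum_j X^j_t$, translate the wealth-dependent optimal policy back through $\tilde{\pi}^{i,\ast}_t=\pi^{i,\ast}_t-c_i\bar{\pi}^\ast_t$, average over $i$ and invert $1-\psi_N$, and obtain the wealth processes from the explicit solution \eqref{eq:Xt} of the linear SDE. The only additions beyond the paper's argument are your explicit remarks on the affine bijection of the admissible set and on admissibility, which the paper leaves implicit.
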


\begin{proof}
Using the dynamics of $X^{1},\ldots ,X^{N}$ in \eqref{X-i-complete}, problem \eqref{V-i-risktolerance} reduces to 
\begin{equation*}
v\left( \tilde{x}\right) =\sup_{\widetilde{\pi }^{i}\in \mathcal{A}}E_{%
\mathbb{P}}\left[ -\exp \left( -\frac{1}{\delta _{T}^{i}}\widetilde{X}%
_{T}^{i}\right) \right] ,
\end{equation*}%
where $\widetilde{X}_{t}^{i}=X_{t}^{i}-\frac{c_{i}}{N}\Sigma
_{j=1}^{N}X_{t}^{j}$ satisfies $d\widetilde{X}_{t}^{i}=\widetilde{\pi }%
_{t}^{i}\left( \mu _{t}\,dt+\sigma _{t}\,dW_{t}\right) $ with $\widetilde{X}%
_{0}^{i}=\tilde{x}_{i}.$ Taking $\pi^j \in \mc{A}$, $j \neq i$, as fixed and using Proposition~\ref{prop:single}, we deduce that
$\pi^{i, \ast}$ satisfies
\begin{equation}\label{eq:pistar}
\widetilde{\pi }_{t}^{i,\ast }=\pi _{t}^{i,\ast }-\frac{c_{i}}{N}\left(
\Sigma _{j\neq i}\pi _{t}^{j}+\pi _{t}^{i,\ast }\right) = \delta _{t}^i \frac{\lambda _{t}-\eta _{t}^i-\xi _{t}^i}{\sigma
	_{t}}+\frac{\xi _{t}^i}{\sigma _{t}}\widetilde X_{t}^{i, \ast },
\end{equation}%
where $\widetilde X_{t}^{i, \ast}$ is the wealth process $\widetilde X_{t}^{i}$ associated with the strategy $\widetilde{\pi }_{t}^{i,\ast }$.

At equilibrium, $\pi_t^j$ in \eqref{eq:pistar} coincides with $\pi_t^{j, \ast}$. Therefore, averaging over $i\in \mathcal{I}$ gives%
\begin{equation*}
\bar{\pi}_{t}^{\ast }-\psi_N\bar{\pi}_{t}^{\ast } = \frac{1}{\sigma _{t}}\left( \lambda _{t}\varphi_N^1(t)-\varphi_N^2(t)+\varphi_N^3(t)-\varphi_N^4(t)\bar{X}_{t}^{\ast }\right) .
\end{equation*}
Dividing both sides by $1-\psi_N$ yields \eqref{eq:pibar-complete}, and then \eqref{eq:pirelation} follows.

To obtain explicit expressions of $X_t^{i, \ast}$ and $\bar{X}_{t}^{\ast }$, we solve for $%
\widetilde{X}_{t}^{i,\ast }$ using the optimal
strategy deduced in Section~\ref{sec:single-complete} ({\it cf.} \eqref{def:piast}). We then obtain
\begin{equation*}
\widetilde{X}_{t}^{i,\ast }=X_{t}^{i,\ast }-\frac{c_{i}}{N}\sum_{j=1
}^N X_{t}^{j,\ast } =\tilde{x}_{i}\Phi _{0,t}^{i}+\int_{0}^{t}\delta _{s}^{i}(\lambda _{s}-\xi
_{s}^{i})(\lambda _{s}-\eta _{s}^{i}-\xi _{s}^{i})\Phi
_{s,t}^{i}\,ds+\int_{0}^{t}\delta _{s}^{i}(\lambda _{s}-\eta _{s}^{i}-\xi
_{s}^{i})\Phi _{s,t}^{i}\,dW_{s},
\end{equation*}
with $\Phi _{s,t}^{i}$ as in \eqref{Phis}. We conclude by averaging over all $i \in \mathcal{I}$.
\end{proof}

\subsection{The It\^{o}-diffusion common-noise MFG}

Let $(\Omega ,\mathcal{F},\mathbb{P})$ be a probability space that
supports the Brownian motion $W$ as well as the {\it random} {type} vector 
\begin{equation*}
\theta =(x,\delta _{T},c),
\end{equation*}%
which is independent of $W$. As before, we denote by $(\mathcal{F}%
_{t})_{t\in \lbrack 0,T]}$ the natural filtration generated by $W$, and $(\mathcal{F}_{t}^{{MF}})_{t\in \lbrack 0,T]}$
with $\mathcal{F}_{t}^{{MF}}=\mathcal{F}_{t}\vee \sigma (\theta )$. In
the mean-field setting, we model the representative player. One may also
think of a continuum of players whose initial wealth $x$ and the interaction
parameter $c$ are random, chosen at initial time $0$, similar to the MFG in Section~\ref{sec_MFG-incomplete} herein. However, now, their risk tolerance
coefficients have \emph{two} sources of randomness, related to their form and their
terminal (at $T$) measurability, respectively. Specifically, at initial time $0$, it is
determined how these coefficients will depend on the final information, provided at $T$. For example, in the
Markovian case, this amounts to (randomly) selecting at time $0$ the functional form
of $\delta (\cdot )$ and, in turn, the risk tolerance used for utility
maximization is given by the random variable $\delta (S_{T})$, which depends
on the information $\mathcal{F}_{T}$ through $S_{T}.$

Similarly to \eqref{V-i-risktolerance}, we are concerned with the optimization problem 
\begin{equation}
V(x) =\sup_{\pi \in \mathcal{A}^{MF}}E_{\mathbb{P}}\left[ \left.
-\exp \left( -\frac{1}{\delta _{T}}\left( X_{T}^{\pi }-c\overline{X}\right)
\right) \right \vert \mathcal{F}_{0}^{MF},\text{ }X_{0}=x\right] ,
\label{def:MFG}
\end{equation}%
and the definition of the mean-field game is analogous to Definition~\ref{def:MFG-incomplete}.

Let the processes $\left( \delta _{t}\right) _{t\in \left[ 0,T\right] }$ and $\left( \xi
_{t}\right) _{t\in \left[ 0,T\right] }$ be given by%
\begin{equation}
\delta _{t}=E_{\mathbb{Q}}[\delta _{T}|\mathcal{F}_{t}^{MF}]\text{ \  \  \ and
\  \ }d\delta _{t}=\xi _{t}\delta _{t}\,dW_{t}^{\mathbb{Q}},  \label{delta-xi}
\end{equation}%
with $\mathbb{Q}$ defined on $\mathcal{F}_{T}^{MF}$ by \eqref{Q-measure}. The process $(\delta_t)_{t \in [0, T]}$  may be interpreted as the arbitrage-free price of the risk tolerance ``claim'' $\delta_T$ for this \emph{representative} player.
Let also $\mathbb{\widetilde{Q}}$ be defined on $\mathcal{F}_{T}^{MF}$ by 
\begin{equation*}
\frac{d\widetilde{\mathbb{Q}}}{d\mathbb{Q}}=\delta _{T},
\end{equation*}%
and consider the martingale $M_{t}=E_{\mathbb{\widetilde{Q}}}\left[ \left. \frac{1}{2%
}\int_{0}^{T}\left( \lambda _{s}-\xi _{s}\right) ^{2}ds\right \vert \mathcal{%
F}_{t}^{MF}\right] $ and $\left( \eta _{t}\right) _{t\in \left[ 0,T\right] }$
to be such that 
\begin{equation}
dM_{t}=\eta _{t}\,dW_{t}^{\widetilde{\mathbb{Q}}},  \label{eta}
\end{equation}%
with $W_{t}^{\widetilde{\mathbb{Q}}}=W_{t}+\int_{0}^{t}\left( \lambda
_{s}-\xi _{s}\right) ds.$ The processes $\delta ,\xi $ and $\eta $ are all $%
\mathcal{F}_{t}^{MF}$-adapted.

\medskip
We now state the main result of this section.

\begin{prop}\label{prop:mfg-complete}
If $E_{\mathbb{P}}[c] <1$, there exists a
MFG equilibrium $\left( \pi _{t}^{\ast }\right) _{t\in \left[ 0,T\right] }$,
given by%
\begin{multline}\label{eq_piast-complete}
\pi _{t}^{\ast }=\frac{c}{1-E_{\mathbb{P}}[c]}\frac{1}{\sigma _{t}}\left(
\lambda _{t}E_{\mathbb{Q}}[\delta _{T}|\mathcal{F}_{t}]-E_{\mathbb{Q}%
}[\delta _{T}(\xi _{t}+\eta _{t})|\mathcal{F}_{t}]+E_{\mathbb{P}%
}[X_{t}^{\ast }\xi _{t}|\mathcal{F}_{t}]-E_{\mathbb{P}}[c\xi _{t}|\mathcal{F}%
_{t}]E_{\mathbb{P}}[X_{t}^{\ast }|\mathcal{F}_{t}]\right)\\
+\frac{1}{\sigma _{t}}\left(\delta _{t}(\lambda _{t}-\xi _{t}-\eta
_{t})+(X_{t}^{\ast }-cE_{\mathbb{P}}[X_{t}^{\ast }|\mathcal{F}_{t}])\xi
_{t}\right),
\end{multline}%
with $\delta, \xi $ and $\eta $ as in \eqref{delta-xi} and \eqref{eta}, and 
$\left( X_{t}^{\ast }\right) _{t\in \left[ 0,T\right] }$ being the
associated optimal wealth process, solving
\begin{equation}\label{eq:Xoptimal-complete}
\,d X_t^\ast = \pi_t^\ast (\mu_t \, dt + \sigma_t \,dW_t).
\end{equation}

The value of the MFG is given by 
\begin{equation*}
V(x)=-\exp \left( -\frac{1}{E_{\mathbb{Q}}[\delta _{T} \vert \mc{F}_0^{MF}]\text{ }}(x-cm)-E_{%
\widetilde{\mathbb{Q}}}\left[ \frac{1}{2}\int_{0}^{T}\left( \lambda _{s}-\xi
_{s}\right) ^{2}ds \Big\vert \mc{F}_0^{MF}\right] \right), \ m = E_{\mathbb{P}}[x].
\end{equation*}
\end{prop}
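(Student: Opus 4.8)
The plan is to follow the blueprint of the incomplete-market MFG (the Proposition in Section~\ref{sec_MFG-incomplete}): reduce the representative player's problem to the single-player problem of Proposition~\ref{prop:single} by absorbing the mean-field interaction term into a shifted state variable, and then close the mean-field fixed point by a self-consistency computation. Throughout I would exploit that the type vector $\theta=(x,\delta_T,c)$ is independent of $W$ and $\mathcal{F}_0^{MF}$-measurable, so that conditioning on $\mathcal{F}_t$ integrates out $\theta$ while leaving the $W$-driven quantities $\lambda_t,\sigma_t$ untouched.

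First, I fix a candidate mean-field wealth and introduce the decoupled state $\tilde{x}_t := X_t - c\,\overline{X}_t$, where $\overline{X}_t := E_{\mathbb{P}}[X_t^*|\mathcal{F}_t]$ for the as-yet-unknown optimal wealth $X^*$ (so that $\overline{X}_T=\overline{X}^*$). The key preliminary observation is that $\overline{X}$ is itself a self-financing wealth process: since $X_t^* = x + \int_0^t \pi_s^*(\mu_s\,ds+\sigma_s\,dW_s)$ with $\pi^*$ being $\mathcal{F}_t^{MF}$-adapted and $\theta\perp W$, a stochastic Fubini argument gives $\overline{X}_t = m + \int_0^t \bar{\pi}_s(\mu_s\,ds+\sigma_s\,dW_s)$ with $\bar{\pi}_s := E_{\mathbb{P}}[\pi_s^*|\mathcal{F}_s]$ and $m=E_{\mathbb{P}}[x]$. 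Consequently $\tilde{x}_t$ solves $d\tilde{x}_t = \tilde{\pi}_t(\mu_t\,dt+\sigma_t\,dW_t)$ with $\tilde{\pi}_t = \pi_t - c\,\bar{\pi}_t$ and $\tilde{x}_0 = x - cm$, and the objective \eqref{def:MFG} becomes exactly the single-player problem $\sup_{\tilde\pi}E_{\mathbb{P}}[-\exp(-\tilde{x}_T/\delta_T)\,|\,\mathcal{F}_0^{MF}]$. Proposition~\ref{prop:single} then yields the optimal $\tilde{\pi}_t^* = \delta_t\frac{\lambda_t-\eta_t-\xi_t}{\sigma_t}+\frac{\xi_t}{\sigma_t}\tilde{x}_t^*$ together with the explicit optimal state $\tilde{x}_t^*$ (the analogue of \eqref{eq:Xt}), and also delivers the value $-\exp(-\tilde{x}_0/\delta_0 - H_0)$.

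Next, I would close the fixed point. Substituting $\tilde{x}_t^* = X_t^* - c\overline{X}_t$ into $\pi_t^* = \tilde{\pi}_t^* + c\bar{\pi}_t$ and taking $E_{\mathbb{P}}[\cdot|\mathcal{F}_t]$ produces a linear equation for $\bar{\pi}_t$; since $\overline{X}_t$ and $\bar{\pi}_t$ are $\mathcal{F}_t$-measurable while $c\perp\mathcal{F}_t$, the $c\bar{\pi}_t$ term contributes $E_{\mathbb{P}}[c]\,\bar{\pi}_t$, and solving (which is precisely where $E_{\mathbb{P}}[c]<1$ enters) gives $\bar{\pi}_t$ in the form \eqref{eq:pibar-complete} with the averages $\varphi_N^1,\dots,\varphi_N^4$ replaced by their mean-field conditional-expectation counterparts. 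Reinserting $\bar{\pi}_t$ into $\pi_t^* = \tilde{\pi}_t^* + c\bar{\pi}_t$ yields precisely \eqref{eq_piast-complete}; the consistency $\overline{X}_t = E_{\mathbb{P}}[X_t^*|\mathcal{F}_t]$ holds by construction, so $\overline{X}^*$ is the correct mean-field term and $\pi^*$ is an equilibrium in the sense of Definition~\ref{def:MFG-incomplete}. Admissibility follows from Assumption~\ref{assump:complete} and $\xi,\eta\in\mathcal{L}^2$ as in the single-player proof, while $\mathcal{F}_t^{MF}$-measurability is immediate from the construction. The value $V(x)$ is then read off from the single-player value at $\tilde{x}_0 = x - cm$, with $\delta_0 = E_{\mathbb{Q}}[\delta_T|\mathcal{F}_0^{MF}]$ and $H_0 = E_{\widetilde{\mathbb{Q}}}[\frac{1}{2}\int_0^T(\lambda_s-\xi_s)^2\,ds\,|\,\mathcal{F}_0^{MF}]$.

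I expect the main obstacle to be the self-consistency computation, specifically the averaging of the type-dependent quantities that no longer factor. Because each ``personalized'' distortion $\xi_t$ (and $\eta_t$) depends on the player's own random risk tolerance $\delta_T$, the wealth $X_t^*$ and the distortion $\xi_t$ are correlated across types, so $E_{\mathbb{P}}[X_t^*\xi_t|\mathcal{F}_t]$ does not split and must be carried intact; this is the origin of the extra $E_{\mathbb{P}}[X_t^*\xi_t|\mathcal{F}_t] - E_{\mathbb{P}}[c\xi_t|\mathcal{F}_t]E_{\mathbb{P}}[X_t^*|\mathcal{F}_t]$ structure in \eqref{eq_piast-complete}, which is absent in the incomplete-market case. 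A related technical point is to justify the identities $E_{\mathbb{P}}[\delta_t|\mathcal{F}_t] = E_{\mathbb{Q}}[\delta_T|\mathcal{F}_t]$ and $E_{\mathbb{P}}[\delta_t(\xi_t+\eta_t)|\mathcal{F}_t] = E_{\mathbb{Q}}[\delta_T(\xi_t+\eta_t)|\mathcal{F}_t]$, which rest on the fact that the density $Z_T$ depends only on $W$ and is independent of the residual type randomness. Together with the stochastic Fubini step, these are the places where the assumption $\theta\perp W$ does the real work.
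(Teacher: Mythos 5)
Your proposal is correct and follows essentially the same route as the paper's proof: reduction of the representative player's problem to the single-player problem of Proposition~\ref{prop:single} via the shifted state $X_t-c\overline{X}_t$, closure of the fixed point by conditioning on $\mathcal{F}_t$ and solving the resulting linear equation (where $E_{\mathbb{P}}[c]<1$ enters), and the measure-change identities $E_{\mathbb{P}}[\delta_t|\mathcal{F}_t]=E_{\mathbb{Q}}[\delta_T|\mathcal{F}_t]$ justified by the independence of $\theta$ from $W$. The only piece you invoke without proof is the ``stochastic Fubini'' commutation of $E_{\mathbb{P}}[\,\cdot\,|\mathcal{F}_t]$ with the stochastic integral, which the paper isolates and proves as Lemma~\ref{lem:condE} via a $\pi$-$\lambda$ argument; you correctly identify this as the step where $\theta\perp W$ does the work.
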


For the proof, we will need the following lemma.

\begin{lem}\label{lem:condE}
If $X$ is a $\mathcal{F}_{s}^{MF}$-measurable integrable
random variable, then $E_{\mathbb{P}}[X|\mathcal{F}_{t}]=E_{\mathbb{P}}[X|%
\mathcal{F}_{s}]$, for $s\in \left[ 0,t\right] $. 
\end{lem}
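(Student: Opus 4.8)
The plan is to exploit the product structure $\mathcal{F}_s^{MF}=\mathcal{F}_s\vee \sigma(\theta)$ together with the standing independence of the type vector $\theta$ and the Brownian motion $W$, reducing the general integrable $X$ to simple generators by a monotone-class argument. First I would observe that the collection $\Pi:=\{A\cap B: A\in\mathcal{F}_s,\ B\in\sigma(\theta)\}$ is a $\pi$-system generating $\mathcal{F}_s^{MF}$. Hence, by linearity and the functional monotone class theorem, it suffices to establish the claimed identity for indicators $X=\mathbf{1}_A\mathbf{1}_B$ with $A\in\mathcal{F}_s$ and $B\in\sigma(\theta)$; the passage to general bounded and then integrable $\mathcal{F}_s^{MF}$-measurable $X$ follows by approximation (finite linear combinations, monotone limits of nonnegative functions, and truncation $X_n=(X\wedge n)\vee(-n)$ together with the $L^1$-continuity of conditional expectation).

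Next I would fix such an $X=\mathbf{1}_A\mathbf{1}_B$. Since $A\in\mathcal{F}_s\subseteq\mathcal{F}_t$, the factor $\mathbf{1}_A$ is $\mathcal{F}_t$-measurable and can be pulled out, so that $E_{\mathbb{P}}[X|\mathcal{F}_t]=\mathbf{1}_A\,E_{\mathbb{P}}[\mathbf{1}_B|\mathcal{F}_t]$. The decisive point is that, because $\theta$ is independent of $W$, the $\sigma$-algebra $\sigma(\theta)$ is independent of $\mathcal{F}_t=\sigma(W_u:u\le t)$, whence $E_{\mathbb{P}}[\mathbf{1}_B|\mathcal{F}_t]=\mathbb{P}(B)$. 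Repeating the same computation with $\mathcal{F}_s$ in place of $\mathcal{F}_t$ (again using $A\in\mathcal{F}_s$ and $\sigma(\theta)\perp\mathcal{F}_s$) yields $E_{\mathbb{P}}[X|\mathcal{F}_s]=\mathbf{1}_A\,\mathbb{P}(B)$. The two expressions coincide, which establishes the identity on the generators of $\Pi$ and hence, after the extension described above, for every integrable $\mathcal{F}_s^{MF}$-measurable $X$.

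The argument is essentially measure-theoretic and presents no genuine analytic obstacle; the only points requiring care are (i) checking that $\Pi$ is indeed a $\pi$-system generating $\mathcal{F}_s^{MF}$, so that the monotone class theorem legitimately applies, and (ii) invoking the independence of $\theta$ from the \emph{entire} Brownian filtration $\mathcal{F}_t$ (and not merely from $\mathcal{F}_s$), which is exactly what the assumption $\theta\perp W$ supplies. Equivalently, one may phrase the conclusion as the conditional independence of $\mathcal{F}_s^{MF}$ and $\mathcal{F}_t$ given $\mathcal{F}_s$, which is immediate once the Brownian increments on $(s,t]$ are recognized as independent of $\mathcal{F}_s\vee\sigma(\theta)$.
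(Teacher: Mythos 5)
Your proof is correct, but it is organized differently from the paper's. You decompose the random variable $X$: using the product structure $\mathcal{F}_s^{MF}=\mathcal{F}_s\vee\sigma(\theta)$, you reduce to generators $X=\mathbf{1}_A\mathbf{1}_B$ with $A\in\mathcal{F}_s$, $B\in\sigma(\theta)$, compute both conditional expectations explicitly as $\mathbf{1}_A\,\mathbb{P}(B)$ via the independence $\sigma(\theta)\perp\mathcal{F}_t$, and extend by the functional monotone class theorem plus truncation and $L^1$-continuity. The paper instead keeps $X$ general and decomposes the conditioning $\sigma$-algebra: it takes the $\pi$-system $\mathcal{P}=\{C\cap D: C\in\mathcal{F}_s,\ D\in\sigma\{W_u-W_s,\,s\le u\le t\}\}$ generating $\mathcal{F}_t$, verifies the defining identity $E_{\mathbb{P}}[X\mathbf{1}_{A}]=E_{\mathbb{P}}[E_{\mathbb{P}}[X|\mathcal{F}_s]\mathbf{1}_{A}]$ for $A\in\mathcal{P}$ using that the increments on $(s,t]$ are independent of $\mathcal{F}_s^{MF}$, and closes the argument with the $\pi$--$\lambda$ theorem applied to the collection $\mathcal{L}$ of events satisfying that identity. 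The two routes invoke complementary independence facts ($\theta\perp\mathcal{F}_t$ for yours, $\sigma\{W_u-W_s\}\perp\mathcal{F}_s\vee\sigma(\theta)$ for the paper's), and your closing remark about conditional independence given $\mathcal{F}_s$ is in fact the paper's viewpoint. Your version buys an explicit identification of the conditional expectation on generators and avoids checking the $\lambda$-system closure properties, at the cost of the extra extension ladder over $X$ (linear combinations, monotone limits, truncation); the paper's version handles general integrable $X$ in one pass. Both are complete; just make sure, as you note, that the $\pi$-system $\Pi$ you use is closed under intersection (it is, since $\mathcal{F}_s$ and $\sigma(\theta)$ are $\sigma$-algebras) and that the independence you invoke is of $\theta$ from the whole filtration up to time $t$, which the standing assumption $\theta\perp W$ provides.
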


\begin{proof}
Let $\mathcal{P}:=\{A=C\cap D:C\in \mathcal{F}_{s},\;D\in
\sigma \{W_{u}-W_{s},s\leq u\leq t\} \}$ and $\mathcal{L}=\{A\in \mathcal{F}%
:E_{\mathbb{P}}[X\mathbf{1}_{A}]=E_{\mathbb{P}}[E_{\mathbb{P}}[X|\mathcal{F}%
_{s}]\mathbf{1}_{A}]\}$. Then, the following assertions hold:

(1) $\mathcal{P}$ is a $\pi$-system since both $\mathcal{F}_s$ and $\sigma
\{W_u - W_s, s \leq u \leq t\}$ are $\sigma$-algebras and closed under
intersection. Also $\mathcal{F}_s \subseteq \mathcal{P}$ and $\sigma \{W_u -
W_s, s \leq u \leq t\} \subseteq \mathcal{P}$ by taking $D = \Omega$ and $C
= \Omega$.

(2) $\mathcal{P}\subseteq \mathcal{L}$. For any $A\in \mathcal{P}$, $A=C\cap
D$ with $C\in \mathcal{F}_{s},\;D\in \sigma \{W_{u}-W_{s},s\leq u\leq t\}$,
it holds that
\begin{equation*}
E_{\mathbb{P}}[E_{\mathbb{P}}[X|\mathcal{F}_{s}]\mathbf{1}_{A}]=E_{\mathbb{P}%
}[E_{\mathbb{P}}[X|\mathcal{F}_{s}]\mathbf{1}_{C}\mathbf{1}_{D}]=E_{\mathbb{P%
}}[E_{\mathbb{P}}[X\mathbf{1}_{C}|\mathcal{F}_{s}]\mathbf{1}_{D}]=E_{\mathbb{%
P}}[X\mathbf{1}_{C}]E_{\mathbb{P}}[\mathbf{1}_{D}],
\end{equation*}%
where we have consecutively used that $C\perp D$, the metastability of $%
\mathbf{1}_{C}$, and the independence between $\mathbf{1}_{D}$ and $\mathcal{%
F}_{s}$.

Furthermore, by the independence between $\mathbf{1}_{D}$ and $\mathcal{F}_{s}^{{MF}%
}=\mathcal{F}_{t}\vee \sigma (\theta )$, we deduce 
\begin{equation*}
E_{\mathbb{P}}[X\mathbf{1}_{A}]=E_{\mathbb{P}}[X\mathbf{1}_{C}\mathbf{1}%
_{D}]=E_{\mathbb{P}}[X\mathbf{1}_{C}]E_{\mathbb{P}}[\mathbf{1}_{D}],
\end{equation*}%
and conclude that $A\in \mathcal{L}$. Therefore $\mathcal{P}\subseteq 
\mathcal{L}$.

(3) $\mathcal{L}$ is a $\lambda $-system. It is obvious that $\Omega \in 
\mathcal{L}$ and $A\in \mathcal{L}$ imply that $A^{c}\in \mathcal{L}$. For a
sequence of disjoint sets $A_{1},A_{2},\ldots $ in $\mathcal{L}$, one has $%
\left \vert X\mathbf{1}_{\cup _{i=1}^{\infty }A_{i}}\right \vert \leq
\left
\vert X\right \vert $ and, thus, by the dominated convergence
theorem, we deduce that 
\begin{equation}
E_{\mathbb{P}}[X\mathbf{1}_{\cup _{i=1}^{\infty }A_{i}}]=\sum_{i=1}^{\infty }E_{\mathbb{P}}[X\mathbf{1}_{A_{i}}].  \label{eq:cond1}
\end{equation}%
Similarly, by the inequalities $\Vert E_{\mathbb{P}}[X|\mathcal{F}_{s}]\mathbf{1}_{\cup
_{i=1}^{\infty }A_{i}}\Vert _{1}\leq \Vert E_{\mathbb{P}}[X|\mathcal{F}%
_{s}]\Vert _{1}\leq \Vert X\Vert _{1}$, we have
\begin{equation}
E_{\mathbb{P}}[E_{\mathbb{P}}[X|\mathcal{F}_{s}]\mathbf{1}_{\cup
_{i=1}^{\infty }A_{i}}]=\sum_{i=1}^{\infty }E_{\mathbb{P}}[E_{\mathbb{P}}[X|%
\mathcal{F}_{s}]\mathbf{1}_{A_{i}}].  \label{eq:cond2}
\end{equation}%
Since $A_{i}\in \mathcal{L}$, $\forall i$, the right-hand-sides of
\eqref{eq:cond1} and \eqref{eq:cond2} are equal, which implies
$\cup _{i=1}^{\infty }A_{i}\in \mathcal{L}$.

Therefore, by the $\pi $-$\lambda $ theorem, we obtain that $\mathcal{F}_{t}=\sigma (%
\mathcal{F}_{s}\cup \sigma \{W_{u}-W_{s},s\leq u\leq t\})\subseteq \sigma (%
\mathcal{P})\subseteq \mathcal{L}$. Noticing that $E_{\mathbb{P}}[X|\mathcal{%
F}_{s}]$ is $\mathcal{F}_{t}$-measurable by definition, we have that $E_{%
\mathbb{P}}[X|\mathcal{F}_{t}]=E_{\mathbb{P}}[X|\mathcal{F}_{s}]$. 
\end{proof}

\begin{proof}[Proof of Proposition~\ref{prop:mfg-complete}]
Let $\left( X_{t}^{\alpha }\right) _{t\in \left[ 0,T\right]
} $ be given by $X_{t}^{\alpha }=x+\int_{0}^{t}\mu _{s}\alpha
_{s}\,ds+\int_{0}^{t}\sigma _{s}\alpha _{s}\,dW_{s}$ for an admissible policy $%
\alpha _{t}$ ($\mathcal{F}_{t}^{MF}$-adapted) and define $\overline{X}%
_{t}:=E_{\mathbb{P}}[X_{t}^{\alpha }|\mathcal{F}_{t}].$ Then, 
\begin{equation*}
\overline{X}_{t}=m+E_{\mathbb{P}}\left[ \int_{0}^{t}\mu _{s}\alpha _{s}\,ds\Big|%
\mathcal{F}_{s}\right] +E_{\mathbb{P}}\left[ \int_{0}^{t}\sigma _{s}\alpha
_{s}\,dW_{s}\Big|\mathcal{F}_{s}\right] .
\end{equation*}%
Using Lemma~\ref{lem:condE}, the adaptivity of $\mu _{t}$, $\sigma _{t}$
with respect to $\mathcal{F}_{t}$, and the definition of It\^{o} integral,
we rewrite the above as 
\begin{equation*}
\overline{X}_{t}=m+\int_{0}^{t}\mu _{s}E_{\mathbb{P}}\left[ \alpha _{s}|%
\mathcal{F}_{s}\right] \,ds+\int_{0}^{t}\sigma _{s}E_{\mathbb{P}}\left[ \alpha
_{s}|\mathcal{F}_{s}\right] \,dW_{s}.
\end{equation*}%
Direct arguments yield that the optimization problem \eqref{def:MFG} reduces
to%
\begin{equation*}
V\left( \tilde{x}\right) =\sup_{\widetilde{\pi}\in \mathcal{A}^{MF}}E_{\mathbb{P}}%
\left[ -\exp \left( -\frac{1}{\delta _{T}}\widetilde{X}_{T}\right) \Big \vert \mc{F}_0^{MF}, \widetilde X_0 = \tilde x \right] ,
\end{equation*}%
where $( \widetilde{X}_t) _{t\in \left[ 0,T\right] }$ solves
\begin{equation}
d\widetilde{X}_{t}\equiv d(X_{t}-c\overline{X}_{t})=\widetilde{\pi }_{t}(\mu
_{t}\,dt+\sigma _{t}\,dW_{t}),  \label{X-telda}
\end{equation}%
with $\widetilde{X}_{0}=\tilde{x}=x-cm$ and $\widetilde{\pi }_{t}:=\pi _{t}-cE_{%
\mathbb{P}}[\alpha _{t}|\mathcal{F}_{t}]$. Then, \eqref{def:piast}
yields
\begin{equation}
\widetilde{\pi }_{t}^{\ast }=\delta _{t}\frac{\lambda _{t}-\eta _{t}-\xi _{t}%
}{\sigma _{t}}+\frac{\xi _{t}}{\sigma _{t}}\widetilde{X}_{t}^{\ast },
\label{eq:tildepiast}
\end{equation}%
with $\delta _{t},\xi _{t},\eta _{t}$ given in \eqref{delta-xi} and \eqref{eta}, and $( \widetilde{X}_{t}^{\ast }) _{t\in \left[ 0,T\right] }$
solving \eqref{X-telda} with $\widetilde{\pi }^{\ast }$ being used. On the
other hand, using that $\widetilde{\pi }_{t}^{\ast }=\pi _{t}^{\ast }-cE_{%
\mathbb{P}}[\alpha _{t}|\mathcal{F}_{t}],$ we obtain
\begin{equation*}
\pi _{t}^{\ast }-cE_{\mathbb{P}}[\alpha _{t}|\mathcal{F}_{t}]=\delta _{t}%
\frac{\lambda _{t}-\eta _{t}-\xi _{t}}{\sigma _{t}}+\frac{\xi _{t}}{\sigma
_{t}}\widetilde{X}_{t}^{\ast }.
\end{equation*}%
In turn, using that, at equilibrium, $\alpha =$ $\pi ^{\ast }$, we get 
\begin{equation*}
(1-E_{\mathbb{P}}[c])E_{\mathbb{P}}[\pi _{t}^{\ast }|\mathcal{F}_{t}]=\frac{1%
}{\sigma _{t}}\left( \lambda _{t}E_{\mathbb{P}}[\delta _{t}|\mathcal{F}%
_{t}]-E_{\mathbb{P}}[\delta _{t}(\xi _{t}+\eta _{t})|\mathcal{F}_{t}]+E_{%
\mathbb{P}}[\widetilde{X}_{t}^{\ast }\xi _{t}|\mathcal{F}_{t}]\right) .
\end{equation*}%
Further calculations give
\begin{multline}
\pi _{t}^{\ast }=c\frac{1}{1-E_{\mathbb{P}}[c]}\frac{1}{\sigma _{t}}\left(
\lambda _{t}E_{\mathbb{P}}[\delta _{t}|\mathcal{F}_{t}]-E_{\mathbb{P}%
}[\delta _{t}(\xi _{t}+\eta _{t})|\mathcal{F}_{t}]+E_{\mathbb{P}%
}[X_{t}^{\ast }\xi _{t}|\mathcal{F}_{t}]-E_{\mathbb{P}}[X_{t}^{\ast }|%
\mathcal{F}_{t}]E_{\mathbb{P}}[c\xi _{t}|\mathcal{F}_{t}]\right)\\
+\frac{\delta _{t}(\lambda _{t}-\eta _{t})-\delta _{t}\xi _{t}+X_{t}^{\ast
}\xi _{t}-c\xi _{t}E_{\mathbb{P}}[X_{t}^{\ast }|\mathcal{F}_{t}]}{\sigma _{t}%
}.
\end{multline}%
Finally, we obtain
\begin{equation*}
E_{\mathbb{P}}[\delta _{t}|\mathcal{F}_{t}]=E_{\mathbb{P}}[E_{\mathbb{Q}%
}[\delta _{T}|\mathcal{F}_{t}^{MF}]|\mathcal{F}_{t}]=E_{\mathbb{P}}\left[E_{%
\mathbb{P}}\left[\frac{\delta _{T}Z_{T}}{Z_{t}}\Big|\mathcal{F}_{t}^{MF}\right]\Big|\mathcal{F}%
_{t}\right]=E_{\mathbb{P}}\left[\frac{\delta _{T}Z_{T}}{Z_{t}}\Big|\mathcal{F}_{t}\right]=E_{%
\mathbb{Q}}[\delta _{T}|\mathcal{F}_{t}],
\end{equation*}%
and a similar derivation for $E_{\mathbb{P}}[\left. \delta _{t}(\xi
_{t}+\eta _{t})\right \vert \mathcal{F}_{t}]$. We conclude by checking the admissibility of $\pi^\ast$ which follows from model assumptions, the form of $\pi^\ast$, and equation~\eqref{eq:Xoptimal-complete}. 
\end{proof}

\section{Conclusions and future research directions}\label{sec:conclude}

In It\^{o}-diffusion environments, we introduced and studied a family of $N$-player and common-noise mean-field games in the context of optimal portfolio choice in a common market. The players aim to maximize their expected terminal utility, which depends on their own wealth and the wealth of their peers.

We focused on two cases of exponential utilities, specifically, the classical CARA case and the extended CARA case with random risk tolerance. The former was considered for the incomplete market model while the latter for the complete one. We provided the equilibrium processes and the values of the games in explicit (incomplete market case) and in closed form (complete market case). We note that in the case of random risk tolerances, for which even the single-player case is interesting in its own right, the optimal strategy process depends on the state process, even if the preferences are of exponential type. 

A natural extension is to consider power utilities (CRRA), which are also commonly used in models of portfolio choice. This extension, however, is by no means straightforward. Firstly, in the incomplete market case, the underlying measure depends on the individual risk tolerance, which is not the case for the CARA utilities considered herein (see \eqref{def_MM} for the minimal martingale measure and \eqref{def_ME}-\eqref{ME_BSDE} for the minimal entropy measure, respectively). Secondly, while it is formally clear how to formulate the random risk tolerance case for power utilities, its solution is far from obvious. The authors are working in both these directions.

Our results may be used to study such models when the dynamics of the common market and/or the individual preferences are not entirely known. This could extend the analysis to various problems in reinforcement learning (see, for example, the recent work \cite{leng2020learning} in a static setting). It is expected that results similar to the ones in \cite{wang2020reinforcement} could be derived and, in turn, used to build suitable algorithms (see, also,  \cite{guo2020entropy} for a Markovian case). 

\section*{Acknowledgments}
RH was partially supported by the NSF grant DMS-1953035, and the Faculty Career Development Award and the Research Assistant Program Award at UCSB.

This work was presented at the SIAM Conference on Financial Mathematics and Engineering in 2021. The authors would like to thank the participants for fruitful comments and suggestions.

\bibliographystyle{plain}
\bibliography{Reference}

\begin{thebibliography}{10}

\bibitem{basak2015competition}
Suleyman Basak and Dmitry Makarov.
\newblock Competition among portfolio managers and asset specialization.
\newblock {\em Available at SSRN 1563567}, 2015.

\bibitem{carmona2016lectures}
Ren{\'e} Carmona.
\newblock {\em Lectures on BSDEs, stochastic control, and stochastic
  differential games with financial applications}.
\newblock SIAM, 2016.

\bibitem{ChVi:05}
George Chacko and Luis~M Viceira.
\newblock Dynamic consumption and portfolio choice with stochastic volatility
  in incomplete markets.
\newblock {\em Review of Financial Studies}, 18(4):1369--1402, 2005.

\bibitem{espinosa2015optimal}
Gilles-Edouard Espinosa and Nizar Touzi.
\newblock Optimal investment under relative performance concerns.
\newblock {\em Mathematical Finance}, 25(2):221--257, 2015.

\bibitem{FoSc:91}
Hans F{\"o}llmer and Martin Schweizer.
\newblock Hedging of contingent claims under incomplete information.
\newblock {\em Applied stochastic analysis}, 5(389-414):19--31, 1991.

\bibitem{fu2020mean}
Guanxing Fu, Xizhi Su, and Chao Zhou.
\newblock Mean field exponential utility game: A probabilistic approach.
\newblock {\em arXiv preprint arXiv:2006.07684}, 2020.

\bibitem{guo2020entropy}
Xin Guo, Renyuan Xu, and Thaleia Zariphopoulou.
\newblock Entropy regularization for mean field games with learning.
\newblock {\em arXiv preprint arXiv:2010.00145}, 2020.

\bibitem{huang2006large}
Minyi Huang, Roland~P Malham{\'e}, and Peter~E Caines.
\newblock Large population stochastic dynamic games: closed-loop
  {McKean-Vlasov} systems and the {N}ash certainty equivalence principle.
\newblock {\em Communications in Information and Systems}, 6(3):221--252, 2006.

\bibitem{huang2016mean}
Minyi Huang and Son~Luu Nguyen.
\newblock Mean field games for stochastic growth with relative utility.
\newblock {\em Applied Mathematics \& Optimization}, 74(3):643--668, 2016.

\bibitem{kraft2020dynamic}
Holger Kraft, Andr{\'e} Meyer-Wehmann, and Frank~Thomas Seifried.
\newblock Dynamic asset allocation with relative wealth concerns in incomplete
  markets.
\newblock {\em Journal of Economic Dynamics and Control}, 113:103857, 2020.

\bibitem{lacker2020many}
Daniel Lacker and Agathe Soret.
\newblock Many-player games of optimal consumption and investment under
  relative performance criteria.
\newblock {\em Mathematics and Financial Economics}, 14(2):263--281, 2020.

\bibitem{LaZa:17}
Daniel Lacker and Thaleia Zariphopoulou.
\newblock Mean field and n-agent games for optimal investment under relative
  performance criteria.
\newblock {\em Mathematical Finance}, 29(4):1003--1038, 2019.

\bibitem{lasry2007mean}
Jean-Michel Lasry and Pierre-Louis Lions.
\newblock Mean field games.
\newblock {\em Japanese Journal of Mathematics}, 2(1):229--260, 2007.

\bibitem{leng2020learning}
Yan Leng, Xiaowen Dong, Junfeng Wu, and Alex Pentland.
\newblock Learning quadratic games on networks.
\newblock In {\em International Conference on Machine Learning}, pages
  5820--5830. PMLR, 2020.

\bibitem{musiela2002note}
Marek Musiela and Thaleia Zariphopoulou.
\newblock A note on the term structure of risk aversion in utility-based
  pricing systems.
\newblock {\em Technical report}, 2002.

\bibitem{ringer2011three}
Nathanael~David Ringer.
\newblock {\em Three essays on valuation and investment in incomplete markets}.
\newblock PhD thesis, The University of Texas at Austin, 2011.

\bibitem{RoEl:00}
Richard Rouge and Nicole El~Karoui.
\newblock Pricing via utility maximization and entropy.
\newblock {\em Mathematical Finance}, 10(2):259--276, 2000.

\bibitem{SiZa:05}
Ronnie Sircar and Thaleia Zariphopoulou.
\newblock Bounds and asymptotic approximations for utility prices when
  volatility is random.
\newblock {\em SIAM journal on control and optimization}, 43(4):1328--1353,
  2005.

\bibitem{wang2020reinforcement}
Haoran Wang, Thaleia Zariphopoulou, and Xun~Yu Zhou.
\newblock Reinforcement learning in continuous time and space: A stochastic
  control approach.
\newblock {\em Journal of Machine Learning Research}, 21(198):1--34, 2020.

\bibitem{whitmeyer2019relative}
Mark Whitmeyer.
\newblock Relative performance concerns among investment managers.
\newblock {\em Annals of Finance}, 15(2):205--231, 2019.

\end{thebibliography}

\end{document}